\documentclass[11pt,article]{IEEEtran}
\usepackage{amsmath,amssymb}
\usepackage{amsthm}
\usepackage{mathtools}
\usepackage{color}
\usepackage{algorithm}
\usepackage{algpseudocode}
\usepackage{float}
\newtheorem{theorem}{Theorem}
\newtheorem{lemma}{Lemma}

\newtheorem{assumption}{Assumption}

\newtheorem{example}{Example}
\newtheorem{proposition}{Proposition}

\theoremstyle{remark}
\newtheorem{remark}{Remark}

\newcommand{\sstate}{\mathbf{s}}
\newcommand{\shat}{\hat{\mathbf{s}}}
\newcommand{\shathat}{\hat{\hat{\mathbf{s}}}}
\newcommand{\Sigmahat}{\hat{\Sigma}}

\newcommand{\rvx}{\mathbf{x}}
\newcommand{\rvy}{\mathbf{y}}
\newcommand{\rvz}{\mathbf{z}}
\newcommand{\rvv}{\mathbf{v}}
\newcommand{\rvr}{\mathbf{r}}
\newcommand{\rvu}{\mathbf{u}}
\newcommand{\rve}{\mathbf{e}}
\newcommand{\rvm}{\mathbf{m}}
\newcommand{\rvmsg}{\mathbf{\theta}}

\newcommand{\rvxn}{\rvx^n}
\newcommand{\rvyn}{\rvy^n}
\newcommand{\rvzn}{\rvz^n}
\newcommand{\rvvn}{\rvv^n}

\DeclareMathOperator{\Range}{Range}
\DeclareMathOperator{\Schur}{Schur}

\onecolumn

\title{Feedback Capacity of Stationary Gaussian Channels: 
An Optimal Schalkwijk-Kailath Scheme}

\author{
David Fay and Oron Sabag%
\thanks{D. Fay and O. Sabag are with the Rachel and Selim Benin School of
Computer Science and Engineering, The Hebrew University of Jerusalem,
Jerusalem, Israel.
Emails: \texttt{david.fay@mail.huji.ac.il} and
\texttt{oron.sabag@mail.huji.ac.il}.}%
\thanks{This research was supported by the Israel Science Foundation
(grant No.~1096/23).}
}

\begin{document}
\maketitle

\begin{abstract}
We consider channels with additive colored Gaussian noise and noiseless feedback. Kim’s seminal work derived a stationary variational characterization of feedback capacity and further asserted that the capacity-achieving stationary input need not contain a feedback-independent Gaussian component. These results led to the construction of a simple coding scheme, based on the  Schalkwijk--Kailath (SK) refinement principle, which was shown to be capacity-achieving. A recent note identified a gap in the proof of the feedback-independent component-removal assertion, thereby leaving the optimality of the SK scheme and subsequent results that rely on it incomplete. In this paper, we prove the component-removal assertion for channels with stationary Gaussian noise that has a rational power spectral density. Our proof uses a perturbation analysis of a convex optimization formulation of feedback capacity and, indeed, shows that \emph{every} optimizer assigns zero power to the feedback-independent component. Using this stronger property, we construct from any optimizer an explicit SK coding scheme that achieves every rate below feedback capacity with doubly-exponentially decaying maximal error probability.

\end{abstract}

\section{Introduction}
\label{sec:intro}
We consider additive Gaussian channels
\begin{equation}
    \rvy_i=\rvx_i + \rvz_i,\qquad i=1,2,\ldots,
    \label{eq:channel}
\end{equation}
where \(\rvx_i \in \mathbb R\) is the channel input at time $i$, \(\rvy_i\) is the
channel output at time $i$, and \(\{\rvz_i\}\) is a Gaussian noise process. The encoder has noiseless output feedback so that the input $\rvx_i=f_i(\rvmsg,\rvy^{i-1})$ is a function of past channel outputs \(\rvy^{i-1}:=(\rvy_1,\ldots,\rvy_{i-1})\) and a message \(\rvmsg \sim U[1:2^{nR}]\), where $R$ is the rate and $n$ is the blocklength. The encoder is subject to the average-power constraint $ \frac1n \sum_{i=1}^n \mathbb E[\rvx_i^2]\le P$, and the decoder output is a message estimate $\hat \rvmsg=g_n(\rvyn)$. The definitions of average probability of error, achievable rates and feedback capacity are standard, e.g.,~\cite[Section I]{Kim2010}.

In this paper, we focus on additive colored Gaussian noise (ACGN) channels where the noise process $\{\rvz_i\}$ is not white. In this case, feedback capacity can exceed the no-feedback capacity \cite{Butman1969}, while for the additive white Gaussian noise (AWGN) channel, feedback does not increase capacity \cite{Shannon1956,1054671}. 

\subsection{Feedback capacity of the ACGN channel}
The feedback capacity of the ACGN channel has been studied in numerous papers, e.g. \cite{Kim2010,SabagISIT,SabagKostinaHassibi2023,CoverPombra1989,Gattami2019, Butman1969,Butman1976,Tiernan1974,Ebert1970,Shahar-Doron2004,LiElia,LiuHan2019,Elia2004,Ozarow1990,Yang2007,Kim2006,RawatElia2021,Ihara2012}. Cover and Pombra studied the ACGN with an arbitrary Gaussian noise process, whose covariance matrix is $K_{\rvz^{n}}\succ0$, and derived its feedback capacity \cite{CoverPombra1989}.
In particular, they derived the $n$-block feedback capacity optimization 
\begin{align}\label{eq:CP}
    C_n(P) :=
    \max_{\substack{K_{\mathbf{v}^{n}}\succeq0,\;B_n\ \mathrm{strictly\ lower\ triangular}\\
    \operatorname{Tr}
    (K_{\mathbf{v}^{n}}+B_nK_{\rvz^{n}}B_n^\top)\le nP}}
    \frac{1}{2n}
    \log
    \frac{
    \det\!\left(K_{\mathbf{v}^{n}}+(I+B_n)K_{\rvz^{n}}(I+B_n)^\top\right)}
    {\det K_{\rvz^{n}}}.
\end{align} 
For stationary noise processes, the limit exists and equals the feedback capacity, 
\begin{equation}
    C_{\mathrm{FB}}(P)=\lim_{n\to\infty} C_n(P).
    \label{eq:CP_limit}
\end{equation}

The optimization \eqref{eq:CP} corresponds to linear channel inputs
\begin{align}\label{eq:CP_input}
    \rvxn &= B_n\rvzn + \rvvn,
\end{align}
where $\rvvn\sim\mathcal N(0,K_{\mathbf v^n})$ is a zero-mean Gaussian vector independent of $\rvzn$, and $B_n\in\mathbb R^{n\times n}$ is strictly lower-triangular. Thus, $B_n\rvzn$ is a strictly causal linear function of the past channel noise. Past noise instances are available to the encoder due to the feedback since $\rvz^{i-1}=\rvy^{i-1}-\rvx^{i-1}$. Thus, $B_n\rvzn$ is regarded as the \emph{feedback-dependent} component, while $\rvvn$ is the \emph{feedback-independent} component.

For stationary Gaussian noise with power spectral density (PSD)
$S_Z(e^{j\omega})$, Kim \cite{Kim2010} proved that stationary Gaussian input laws suffice to characterize the Cover--Pombra limit
\eqref{eq:CP_limit}. This yields the frequency-domain variational characterization
\begin{align}
    C_{\mathrm{FB}}(P)
    =
    \sup_{\substack{S_V\ge0,\;B\ \mathrm{strictly\ causal}\\
    \frac1{2\pi}\int_{-\pi}^{\pi}
    \left(
    S_V(e^{j\omega})
    +
    |B(e^{j\omega})|^2S_Z(e^{j\omega})
    \right)d\omega\le P}}
    \frac12
    \int_{-\pi}^{\pi}
    \log
    \left(
    \frac{
    S_V(e^{j\omega})
    +
    |1+B(e^{j\omega})|^2S_Z(e^{j\omega})
    }
    {S_Z(e^{j\omega})}
    \right)
    \frac{d\omega}{2\pi}.
    \label{eq:Kim}
\end{align}
The same work further asserted that the optimization
\eqref{eq:Kim} can be restricted to $S_V\equiv0$ without loss of optimality, eliminating the feedback-independent Gaussian component and yielding
\begin{align}
    C_{\mathrm{FB}}(P)
    =
    \sup_{\substack{B\ \mathrm{strictly\ causal}\\
    \frac1{2\pi}\int_{-\pi}^{\pi}
    |B(e^{j\omega})|^2S_Z(e^{j\omega})\,d\omega\le P}}
    \frac12
    \int_{-\pi}^{\pi}
    \log |1+B(e^{j\omega})|^2
    \frac{d\omega}{2\pi}.
    \label{eq:Kim_reduced}
\end{align}

The reduction in \eqref{eq:Kim_reduced} served as the basis for several significant results. First, its specialization led to a closed-form characterization of the feedback capacity of first-order ARMA noise \cite[Thm.~5.3]{Kim2010}, resolving a long-standing problem. Second, Kim considered stationary Gaussian noise with rational PSDs through finite-dimensional state-space model (SSM) realizations, a class that includes finite-order ARMA processes. For such channels, Yang, Kav\v{c}i\'{c}, and Tatikonda \cite{Yang2007} characterized state-estimation-error input laws with a feedback-independent Gaussian component, leaving open whether stationarity and component removal can be imposed without loss of optimality. Using the stationary characterization and the component-removal assertion, Kim established a capacity-achieving steady-state input law 
\begin{align}\label{eq:SK_intro}
\rvx_i &= K\left( \sstate_i-\mathbb E[\sstate_i\mid\rvy^{i-1}]
\right),
\end{align}
where $\sstate_i$ is the state of the noise SSM realization and $K$ is a time-invariant row vector. The same work also described a  generalized Schalkwijk--Kailath (SK) coding scheme realization of this input law \cite[Lem.~6.1, Thm.~6.1]{Kim2010} that achieves feedback capacity. The generalized scheme essentially follows the estimation-error refinement principle of the classical SK scheme for AWGN channels \cite{SchalkwijkKailath1966,GallagerNakiboglu2010,5714269} and its extensions to ACGN by Butman and Wolfowitz \cite{Butman1969,Butman1976,Wolfowitz1975}.

The reduced formulation \eqref{eq:Kim_reduced} was also used in subsequent work. For example, Liu and Han \cite{LiuHan2019} characterized the ARMA$(k)$ feedback capacity through a system of polynomial equations, while Li and Elia \cite{LiElia} recast the reduced problem using a Youla parametrization and obtained an SK-type coding scheme. 

Derpich and {\O}stergaard \cite{DerpichOstergaard2022} identified an error in the reduction proof from \eqref{eq:Kim} to \eqref{eq:Kim_reduced} by \cite{Kim2010}. Roughly speaking, the reduction proof sought to show that for any feasible pair \((S_V,B)\), there exists a pair \((0,\widetilde B)\) whose objective value is no smaller than that of the original pair. However, the power calculation omitted a cross term that need not vanish. Consequently, the original argument does not establish the reduction \eqref{eq:Kim_reduced} or the results whose proofs depend on it, including the stated optimality of the generalized SK scheme based on
\eqref{eq:SK_intro}.

\subsection{Contribution}
Our main result is that Kim's reduction $S_V\equiv 0$ in \eqref{eq:Kim} holds for any stationary Gaussian noise with rational PSD. Rational PSDs capture ARMA noise processes (of any finite order) and can approximate arbitrarily well continuous PSDs e.g.,~\cite{Chonavel2002}. Our result thus closes the gap identified in \cite{DerpichOstergaard2022} for rational PSDs and supplies the missing component-removal premise in subsequent arguments that rely on this reduction, e.g., \cite{Kim2010,LiuHan2019,LiElia}.

Indeed, our result applies to noise processes admitting a general finite-dimensional SSM realization. Our analysis relies on a convex optimization characterization of feedback capacity. A perturbation argument shows that \emph{every} optimizer assigns zero power to the feedback-independent Gaussian component. For completeness, we also present a simple SK scheme, building on the estimation-error refinement constructions discussed above \cite{Kim2010,LiElia}. We make its connection to the convex formulation explicit by deriving the code parameters from any optimizer, and prove that the resulting scheme achieves every rate below feedback capacity with
doubly-exponentially decaying maximal error probability.

\paragraph*{Notation}
Random variables, either scalar or vector-valued, are denoted by bold
lowercase letters, while deterministic matrices are denoted by uppercase letters. For a sequence of random variables, we write $\rvx^n:=(\rvx_1,\ldots,\rvx_n)$. For a matrix $A$, $A^\top$, $A^\dagger$, and $\rho(A)$ denote its transpose, Moore--Penrose inverse, and spectral radius, respectively.
The notation $A\succeq0$ means that $A$ is positive semidefinite. 

\paragraph*{Organization} Section~\ref{sec:SDP} introduces the state-space noise model, its observer form, and the convex characterization of feedback capacity. Section~\ref{sec:main} presents the main results, Section~\ref{sec:proofs} contains their proofs, and Section~\ref{sec:conclusions} concludes the paper.

\section{State-space models and feedback-capacity optimization}\label{sec:SDP}
This section develops the state-space framework used in our analysis to model the channel noise statistics. We introduce the noise model, derive its observer form, recall the associated convex characterization of feedback capacity, and finally show that stationary Gaussian noise with rational PSD admits such a realization.

\subsection{State-space noise model}
Consider the linear SSM
\begin{align}\label{eq:ss_output}
    \sstate_{i+1} &= F \sstate_i+G \rvu_i,
    \\ \nonumber
    \rvz_i &= H \sstate_i+\rvr_i,
\end{align}
where $\{(\rvu_i,\rvr_i)\}_i$ is an i.i.d. zero-mean Gaussian sequence with a covariance matrix $    \begin{pmatrix}
        U & L\\
        L^\top & R
     \end{pmatrix}$. 
The state is $\sstate_i\in\mathbb R^d$, where $d<\infty$ is arbitrary, and the disturbances are $\rvu_i\in\mathbb R^q,\rvr_i\in\mathbb R$. The output process of the SSM is $\{\rvz_i\}$ corresponding to the Gaussian noise process in \eqref{eq:channel}. The matrices \(F\in\mathbb R^{d\times d}, G\in\mathbb R^{d\times q}, H\in\mathbb R^{1\times d}, U\in\mathbb R^{q\times q},L\in\mathbb R^{q\times 1}, R\in\mathbb R\) are assumed to be known and $R>0$. The initial state $\sstate_1$ is independent of $\{(\rvu_i,\rvr_i)\}$, and is distributed according to $\sstate_1\sim \mathcal N(0,\Sigma_1)$. An initial-covariance condition on $\Sigma_1$ is presented in the following subsection. 

We focus on noise processes with a nonwhite stationary limit. The following assumption is sufficient for the asymptotic stationarity of $\{\rvz_i\}$.
\begin{assumption}
\label{ass:F_stable}
    \(F\) is Schur, i.e., $\rho(F)<1$.
\end{assumption}   

The following assumption excludes the AWGN channel.
\begin{assumption}\label{ass:non_white}
The stationary limiting output process $\{\rvz_i\}$ is nonwhite.
\end{assumption}

\subsection{Kalman filter and observer form}
In \eqref{eq:ss_output}, the state $\sstate_i$ is hidden from both the encoder and the decoder. It is convenient to replace this
realization by an SSM in an observer form whose state is the encoder's MMSE estimate $\shat_i:=\mathbb E[\sstate_i\mid\rvz^{i-1}]$ with error covariance $\Sigma_i:=\operatorname{Cov}(\sstate_i-\shat_i)$.
The Kalman filter gives
\begin{align}
    \shat_{i+1}
    &=
    F\shat_i+K_{p,i}\rve_i,
    \qquad
    \rve_i:=\rvz_i-H\shat_i,
    \label{eq:shat_recursion}
\end{align}
where
$K_{p,i}=(F\Sigma_iH^\top+GL)\Psi_i^{-1}$ and
$\Psi_i:=H\Sigma_iH^\top+R$.
The innovations sequence $\{\rve_i\}$ is Gaussian and white, with
$\rve_i\sim\mathcal N(0,\Psi_i)$. 

Let $\Sigma\succeq0$ denote the maximal positive-semidefinite solution\footnote{Assumption~\ref{ass:F_stable} implies that $(F,H)$ is detectable and therefore there exists a maximal solution. } of the discrete algebraic Riccati equation (DARE)
\begin{align}
    \Sigma
    =
    F\Sigma F^\top+G U G^\top-K_p\Psi K_p^\top,
    \label{eq:Ricatti equation}
\end{align}
where $K_p:=(F\Sigma H^\top+GL)\Psi^{-1}$ and
$\Psi:=H\Sigma H^\top+R$ \cite{Kailath2000}. We also assume $\Sigma_1\succeq \Sigma$, which implies $\Sigma_i \to \Sigma, K_{p,i}\to K_p, \Psi_i \to \Psi$. 

From this point, we use the steady-state parameters. Operationally, the encoder can transmit a sub-linear growing prefix $\rvx_i=0$ so that the encoder and the decoder (since $\rvy_i = \rvz_i$) can apply \eqref{eq:shat_recursion} to have a common estimate $\shat_1$ at the prefix's end. The purpose of transmitting this prefix is to ensure that $\Sigma_i$ is arbitrarily close to $\Sigma$ and we can work with the steady-state observer form
    \begin{align} \label{eq:obs_form}
    \shat_{i+1} &= F\shat_i +  K_{p}\rve_i \nonumber\\
    \rvz_i&= H\shat_i + \rve_i,
\end{align}
where the innovation is i.i.d. with $\rve_i\sim N(0,\Psi)$ and $K_p,\Psi$ are evaluated at the maximal solution of \eqref{eq:Ricatti equation}. Note that the encoder has causal access to the hidden state $\shat_i$, and that the initial state $\shat_1$ is known to the encoder and the decoder. 

\subsection{Capacity optimization and optimal input law}
The feedback capacity of an ACGN channel with SSM noise admits a finite-dimensional convex characterization. We follow the notation of \cite{SabagKostinaHassibi2023,SabagISIT} specialized to the scalar case; similar scalar formulations appear in \cite{Gattami2019,RawatElia2021,SabagKostinaHassibi2021,SabagISIT}.

For the channel
\eqref{eq:channel} with noise generated by \eqref{eq:ss_output}, the
feedback capacity for a power constraint \(P\) is
\begin{align}
    C_{\mathrm{FB}}(P)
    =
    \max_{\substack{
        \Gamma\in\mathbb R^{1\times d}, \ \Sigmahat\in\mathbb S^{d}
        }}
    \quad
    &\frac12
    \log\left(
        \frac{\Psi_Y}{\Psi}
    \right)
    \label{eq:SKH_scalar}\\
    \mathrm{s.t.}\quad
    &
    \begin{pmatrix}
        P & \Gamma\\
        \Gamma^\top & \Sigmahat
    \end{pmatrix}
    \succeq0,
    \nonumber\\
    &
    \Psi_Y
    =
    P+H\Sigmahat H^\top+\Gamma H^\top+H\Gamma^\top+\Psi,
    \nonumber\\
    &
    K_Y = (F\Gamma^\top + F\Sigmahat H^\top+K_p \Psi )\Psi_Y^{-1},
    \nonumber\\
    &
    \mathcal L := \begin{pmatrix}
    F \Sigmahat F^\top+K_p\Psi K_p^\top-\Sigmahat
    &
    F\Gamma^\top+F\Sigmahat H^\top+K_p\Psi
    \\
    \Gamma F^\top+H \Sigmahat F^\top+\Psi K_p^\top
    &
    \Psi_Y
    \end{pmatrix}
    \succeq0.
    \nonumber
\end{align}
This is a convex optimization, and an optimizer $(\Gamma,\Sigmahat)$ parametrizes a capacity-achieving stationary input law
\begin{align}\label{eq:SKH_policy}
    \rvx_i
    &=
    \Gamma\Sigmahat^{\dagger}\bigl(\shat_i-\shathat_i\bigr)+\rvm_i ,
\end{align}
where $\shat_i$ is the encoder's state
estimate in \eqref{eq:shat_recursion}, $\shathat_i:=\mathbb E[\shat_i\mid\rvy^{i-1}]$ is the decoder's MMSE estimate,
which obeys
\begin{align}
    \shathat_{i+1}
    =
    F\shathat_i+K_Y(\rvy_i-H\shathat_i),
    \label{eq:shathat_recursion}
\end{align}
and its corresponding steady-state error variance is $\Sigmahat:= \text{cov}(\shat_i -\shathat_i)$. The sequence $\{\rvm_i\}$ is i.i.d., independent of the noise sequence, and satisfies $\rvm_i\sim\mathcal N(0,M)$ with $M:=P- \Gamma\Sigmahat^{\dagger}\Gamma^\top$. The first term in \eqref{eq:SKH_policy} transmits a linear function of the decoder's state-estimation error and generalizes the input law in \eqref{eq:SK_intro} for SSMs whose state is not directly available to the encoder. On the other hand, $\rvm_i$ is a Gaussian component that is independent of the feedback. 

In the state-space formulation, $M=0$ implies that the stationary input \eqref{eq:SKH_policy} does not have a feedback-independent Gaussian component. In particular, when $M=0$, subtracting \eqref{eq:shathat_recursion} from
\eqref{eq:shat_recursion} gives
\begin{align}
    \shat_{i+1}-\shathat_{i+1}
    &=
    F(\shat_i-\shathat_i)+K_p\rve_i
    -K_Y(\rvx_i+\rvz_i-H\shathat_i) \nonumber\\
    &=
    \bigl[F-K_Y(H+\Gamma\Sigmahat^\dagger)\bigr]
    (\shat_i-\shathat_i)+(K_p-K_Y)\rve_i,
    \label{eq:noise_driven_difference}
\end{align}
where we used $\rvy_i=\rvx_i+\rvz_i$, and the input law \eqref{eq:SKH_policy}. For a Riccati-tight optimizer, obtainable by Lemma~\ref{lem:maximal_solution}, Lemma~\ref{lem:active_modes} shows that the dynamics in \eqref{eq:noise_driven_difference} are stable on $\Range(\Sigmahat)$. The stationary difference $\shat_i-\shathat_i$ is therefore a mean-square limit of linear combinations of past noise innovations. Since each innovation is obtained causally from the channel noise \eqref{eq:shat_recursion}, the stationary input $\rvx_i=\Gamma\Sigmahat^\dagger(\shat_i-\shathat_i)$ is a strictly
causal linear function of the channel noise past. Thus its stationary spectral
representation has $S_V\equiv0$. Therefore, proving that an optimizer of \eqref{eq:SKH_scalar} satisfies $M=0$ establishes Kim's component-removal claim for noise with an SSM realization.

\begin{remark}
When $M=0$, the stationary law \eqref{eq:SKH_policy} contains no explicit message-bearing component, which may appear to preclude communication. However, \eqref{eq:SKH_policy} specifies only the stationary phase. The message can first be embedded using a finite number of channel inputs with a vanishing contribution to the average power, and subsequent inputs refine the decoder's estimate  while following \eqref{eq:SKH_policy}. An explicit SK scheme construction is given in Section~\ref{sec:scheme}.
\end{remark}

\subsection{Rational spectra as SSM}
\label{subsec:psd_to_state}
Consider a zero-mean stationary Gaussian process with a rational non-white PSD $S_Z(e^{j\omega})$. Its normalized outer spectral factorization can be written as
\begin{align}
    S_Z(e^{j\omega})
    =
    \Psi\left|\Phi(e^{j\omega})\right|^2,
    \qquad
    \Phi(\infty)=1,
\end{align}
where $\Psi>0$. Since $\Phi$ is rational, it admits a finite-dimensional
minimal realization
\begin{equation}
    \Phi(z)
    =
    1+H(zI-F)^{-1}G,
    \label{eq:factor_realization}
\end{equation}
where $F\in\mathbb R^{d\times d}$, $G\in\mathbb R^{d\times 1}$, and
$H\in\mathbb R^{1\times d}$. Minimality implies that $(F,G)$ is controllable
and $(F,H)$ is observable. Moreover, since $\Phi$ is a stable spectral factor,
$F$ is Schur.

Consider the SSM, also studied in \cite{Kim2010},
\begin{align}
    \sstate_{i+1}&=F\sstate_i+G\rvu_i,\nonumber\\
    \rvz_i&=H\sstate_i+\rvu_i,
    \label{eq:psd_output}
\end{align}
where \(\{\rvu_i\}_{i\geq1}\) is an i.i.d.\ process with
\(\rvu_i\sim\mathcal N(0,\Psi)\). The initial state is distributed according to
$\sstate_1\sim\mathcal N(0,\Pi)$, $\sstate_1\perp\!\!\!\perp\{\rvu_i:i\geq1\}$,
where $\Pi$ is the unique solution to $\Pi=F\Pi F^\top+G\Psi G^\top$. The realization \eqref{eq:psd_output} is a special case of
\eqref{eq:ss_output}, obtained by setting
$\rvu_i=\rvr_i=\rvu_i$ and $W=L=R=\Psi$. It satisfies
Assumptions~\ref{ass:F_stable} and~\ref{ass:non_white}. 
It can be easily checked via the $z$-transform that the output process $\{\rvz_i\}_{i\ge1}$ generated by \eqref{eq:psd_output} is zero-mean, and the transfer function from
$\rvu(z)$ to $\rvz(z)$ is $\Phi(z)$. Moreover, the initial state distribution is chosen such that the process is stationary.

\begin{example}[ARMA$(1,1)$]
Consider an ARMA(1,1) noise process
\begin{align}
    \rvz_i + \beta\rvz_{i-1} = \rvu_i+\alpha\rvu_{i-1},
    \qquad
    \rvu_i\sim\mathcal N(0,1),
\end{align}
with $|\alpha|, |\beta|<1$ and $\alpha\neq\beta$. Its PSD is
\begin{align}
    S_Z(e^{j\omega})
    =
    \left|
    \frac{1+\alpha e^{-j\omega}}
         {1+\beta e^{-j\omega}}
    \right|^2,
\end{align}
and \eqref{eq:factor_realization} is obtained by choosing
\begin{align}
    F=-\beta,\qquad G=1,\qquad H=\alpha-\beta .
\end{align}
\end{example}

\section{Main result}\label{sec:main}
In this section, we show that every optimizer of the
capacity optimization assigns zero power to the feedback-independent Gaussian component. We then use this structural property to construct an explicit SK scheme.
\begin{theorem}[Optimal input law]
\label{thm:main}
Consider the ACGN channel \eqref{eq:channel} with noise generated by
\eqref{eq:ss_output} under the standing assumptions. Then every optimal solution $(\Gamma,\Sigmahat)$ to the feedback capacity optimization \eqref{eq:SKH_scalar} satisfies $$M = P - \Gamma \Sigmahat^\dagger\Gamma^\top= 0.$$ 
Consequently, every optimizer induces a capacity-achieving stationary input law of the form
\begin{align}
\label{eq:policy_M0}
    \rvx_i = \Gamma \Sigmahat^\dagger(\shat_i - \hat{\shat}_i).
\end{align}
\end{theorem}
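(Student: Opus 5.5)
Throughout, write $\widetilde F:=F-K_YH$ (a shorthand not defined earlier in the paper). The plan is to argue by contradiction through a perturbation: take any optimizer $(\Gamma,\Sigmahat)$ with $M=P-\Gamma\Sigmahat^\dagger\Gamma^\top>0$, and build a feasible pair $(\Gamma',\Sigmahat')$ with a strictly larger objective $\Psi_Y$. The objective is $\frac12\log(\Psi_Y/\Psi)$ with $\Psi_Y$ affine in $(\Gamma,\Sigmahat)$. So it suffices to find a feasible direction along which $\Psi_Y$ strictly increases.

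\textbf{Step 1 (Schur complement form of the constraints).} I would rewrite $\mathcal L\succeq0$ via the Schur complement with respect to $\Psi_Y>0$. This gives the Riccati-type inequality
\begin{align*}
F\Sigmahat F^\top+K_p\Psi K_p^\top-\Sigmahat-K_Y\Psi_YK_Y^\top\succeq0 .
\end{align*}
By the Riccati-tightness lemma (Lemma~\ref{lem:maximal_solution}) I may assume this holds with equality, i.e.\ $\Sigmahat$ is the maximal solution. The first LMI reads $\Gamma=\Gamma\Sigmahat^\dagger\Sigmahat$ (so $\Gamma^\top\in\Range(\Sigmahat)$) together with $M\ge0$.

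\textbf{Step 2 (the perturbation).} With $M>0$, the slack power lets me enlarge the feedback-dependent part. Two candidate directions:
\begin{enumerate}
\item[(i)] scale $\Gamma\mapsto(1+\epsilon)\Gamma$ while keeping $\Sigmahat$;
\item[(ii)] move $\Gamma$ toward $-H\Sigmahat$ or $+H\Sigmahat$, depending on the sign of $(\Gamma+H\Sigmahat)H^\top$.
\end{enumerate}
The more robust choice is the pair $\Gamma'=\Gamma+\epsilon\Delta$, $\Sigmahat'=\Sigmahat+\epsilon D$. Here $\Delta$ is chosen so that $\frac{d}{d\epsilon}\Psi_Y=\Delta H^\top+HDH^\top+\ldots>0$, and $D$ solves the linearized Riccati (Lyapunov) equation for the closed loop $\widetilde F$. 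This keeps the $\mathcal L$ constraint feasible to first order. Feasibility of the first LMI for small $\epsilon$ is automatic, since $M>0$ gives strict slack in the $P$ entry on $\Range(\Sigmahat)$. The key step is then to show that $\Psi_Y$ strictly increases to first order:
\begin{align*}
\frac{d}{d\epsilon}\Psi_Y>0 .
\end{align*}
I would use the stability of the error dynamics on $\Range(\Sigmahat)$ (Lemma~\ref{lem:active_modes}) to solve the Lyapunov equation for $D$ as a convergent series.

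\textbf{Step 3 (excluding a zero derivative).} The main obstacle is showing that some direction gives a strictly positive derivative. Otherwise the KKT conditions might hold with $M>0$. To handle this I would write the KKT system instead, with multipliers $\lambda\ge0$ for the power block and $\Lambda\succeq0$ for $\mathcal L$. Complementary slackness with $M>0$ forces the dual matrix of the first LMI to have rank at most one and to be annihilated by the $P$ direction, which pins down $\lambda$ in terms of $\Lambda$. The stationarity conditions then reduce to a Lyapunov equation in $\Lambda$ driven by $\widetilde F$. I expect this to force $\Gamma+H\Sigmahat$ and the innovation gain $K_p-K_Y$ into a degenerate configuration. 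That degeneracy would mean the decoder's estimate is unaffected by the noise colouring, i.e.\ the stationary noise is white, contradicting Assumption~\ref{ass:non_white}. The nonwhiteness (equivalently $K_p\neq0$ on observable modes) is where the contradiction must come from. This step, the dual/Lyapunov calculation ruling out a degenerate stationary point, is the part I expect to be hardest.

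\textbf{Step 4 (conclusion).} Once $M=0$ holds, the input law \eqref{eq:SKH_policy} reduces to \eqref{eq:policy_M0}. Capacity-achievability is inherited from the convex characterization \eqref{eq:SKH_scalar}.
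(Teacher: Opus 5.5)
\textbf{There is a genuine gap: the decisive step is never carried out.} You leave the direction $(\Delta,D)$ unspecified, and Step~3 only states an expectation that a KKT/Lyapunov computation will force a degenerate configuration. Your intuition that nonwhiteness must supply the contradiction is right. What is missing is a direction along which the Riccati LMI improves \emph{exactly}, not merely to first order. With $\Xi=F\Xi F^\top+K_pK_p^\top$ (open-loop $F$, driven by $K_p$), take $\Gamma\mapsto\Gamma+\varepsilon H\Xi$ and $\Sigmahat\mapsto\Sigmahat-\varepsilon\Xi$. This keeps $\Gamma+H\Sigmahat$ fixed, and hence the off-diagonal block of $\mathcal L$. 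Since $\mathcal L$ is affine in $Q$, it changes by exactly $\varepsilon\,\mathrm{diag}(K_pK_p^\top,H\Xi H^\top)\succeq0$. Meanwhile $\Psi_Y$ increases by $\varepsilon H\Xi H^\top$, which is positive precisely because the noise is nonwhite (Lemma~\ref{lem:lyapunov}).

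Choosing $D$ from the linearized Riccati equation does not achieve this. $\Schur(\mathcal L(\cdot))$ is matrix-concave in $Q$, so at a Riccati-tight point a direction with zero first-order change generally pushes the Schur complement out of the PSD cone at second order. A KKT argument does not avoid the issue either. To reach a contradiction you would have to pair stationarity with this same direction. You would still need the kernel property below to annihilate the multiplier of $Q\succeq0$, plus a constraint qualification for the multipliers to exist.

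\textbf{The first LMI is not automatically feasible.} Your claim that it stays feasible because $M>0$ fails when $\Sigmahat$ is singular. $M>0$ gives slack only in the scalar direction, but you also need $\Sigmahat+\varepsilon D\succeq0$ and $(\Gamma+\varepsilon\Delta)^\top\in\Range(\Sigmahat+\varepsilon D)$. For the working direction, which \emph{decreases} $\Sigmahat$, this amounts to $\ker\Sigmahat\subseteq\ker\Xi$. Proving this is the real content of Lemma~\ref{lem:kernel}:
\begin{itemize}
\item $\Schur(Q)>0$ gives $\ker Q=\{0\}\times\ker\Sigmahat$.
\item Riccati tightness gives $\mathcal L(Q)=\Psi_Y\binom{K_Y}{1}\binom{K_Y}{1}^\top$.
\item Testing $\mathcal L(Q)$ on $(v,-K_Y^\top v)$ for $v\in\ker\Sigmahat$ shows that $\ker\Sigmahat$ is $F^\top$-invariant and annihilated by $K_p^\top$, hence $\Xi v=0$.
\end{itemize}
The perturbation then vanishes on $\ker Q$, and the perturbed matrix stays PSD for small $\varepsilon$.

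\textbf{The appeal to Lemma~\ref{lem:active_modes} is circular.} Its proof assumes $M=0$, for example in $\Psi_Y-\Psi=\tilde H\Sigmahat\tilde H^\top$. It also concerns $F-K_Y\tilde H$, not your $\widetilde F=F-K_YH$.

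\textbf{The reduction to a Riccati-tight optimizer needs one more step.} When you pass to it via Lemma~\ref{lem:maximal_solution}, you must note that tightening with $\Gamma$ fixed cannot decrease $M$. This is $\Schur(Q)\le\Schur(Q^\star)$, by the variational form of the Schur complement. Without it you only conclude $M=0$ for the tightened optimizer, not for every optimizer.
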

For the rational-spectrum realization of Section~\ref{subsec:psd_to_state}, Theorem~\ref{thm:main} establishes Kim's reduction from \eqref{eq:Kim} to \eqref{eq:Kim_reduced}, thereby closing the gap discussed in Section~\ref{sec:intro} and supplying the missing premise in subsequent arguments that rely on this reduction. Theorem \ref{thm:main} also has an operational consequence: an optimizer determines the refinement gain $\Gamma\Sigmahat^\dagger$. In the next section, we use this gain to construct an explicit generalized SK scheme that achieves capacity. 
\subsection{Capacity-achieving SK scheme}\label{sec:scheme}
We provide an overview of the SK coding scheme that achieves the feedback capacity. A message is represented by a vector of one-dimensional PAM symbols. The scheme consists of two phases: message embedding and refinement. Let
\begin{align}
    \Delta_i:=\shat_i-\shathat_i
    \label{eq:def_delta}
\end{align}
denote the mismatch between the encoder's and decoder's state estimates with $\Delta_1=0$. During the message-embedding phase, comprising the first $d$ channel uses, the channel inputs embed the PAM vector into the mismatch $\Delta_{d+1}$. Throughout this phase, the encoder's state estimate $\shat_i$ evolves according to \eqref{eq:obs_form}, while the decoder's state estimate is updated according to
\begin{align}
    \shathat_{i+1}
    =
    F\shathat_i+K_p(\rvy_i-H\shathat_i),
    \qquad i=1,\ldots,d.
    \label{eq:embedding_decoder}
\end{align}
Recall that the encoder can reproduce this recursion from the feedback. The purpose of using the same gain $K_p$ at the embedding phase is to make $\Delta_{d+1}$ a deterministic linear function of the embedded PAM vector. During the refinement phase, $i=d+1,\ldots,n$, the encoder transmits inputs using the stationary law \eqref{eq:policy_M0}, while $\shat_i$ and
$\shathat_i$ evolve according to \eqref{eq:obs_form} and
\eqref{eq:shathat_recursion}, respectively. After $n$ channel uses, the decoder uses the channel outputs to estimate the embedded PAM symbols and recovers the message by nearest-neighbor decoding.

\underline{Scheme parameters:}
Let $(\Gamma,\hat\Sigma)$ be any optimizer of \eqref{eq:SKH_scalar}.
If $\hat\Sigma$ does not satisfy the Riccati equation 
\begin{align}
    \hat\Sigma &=
    F\hat\Sigma F^\top+K_p\Psi K_p^\top
    -K_Y\Psi_YK_Y^\top,
    \label{eq:refinement_Riccati}
\end{align}
then, keeping $\Gamma$ fixed, compute the Riccati solution specified in Lemma~\ref{lem:maximal_solution} and replace $\hat\Sigma$ by this solution. By Lemma~\ref{lem:maximal_solution}, the resulting pair remains optimal; we continue to denote it by $(\Gamma,\hat\Sigma)$.

Let $\hat\Sigma=U\Lambda U^\top$ be the reduced SVD of $\hat\Sigma$, where $U\in\mathbb R^{d\times r}$ with $U^\top U = I_r$ and $\Lambda\in\mathbb R^{r\times r}\succ0$. Define
$F_p:=F-K_p(H+\Gamma\hat\Sigma^\dagger)$ and $A:=U^\top F_pU$.  Choose a real Jordan decomposition $A=TJT^{-1}$ and define $W:=UT$.




By Lemma~\ref{lem:active_modes}, $$C_{\mathrm{FB}}(P)=\sum_{j=1}^r\log\rho_j,$$ where $\rho_j>1$ denotes the eigenvalue modulus associated with the $j$th real modal coordinate of $J$. Hence, for any $R<C_{\mathrm{FB}}(P)$, choose rates $R_1,\ldots,R_r$ such that $0\le R_j<\log\rho_j$ and $\sum_{j=1}^r R_j=R$. Let $\mathcal A_{j,n}$ consist of $2^{nR_j}$ equally spaced PAM points in $[-1/2,1/2]$, including both endpoints. A message $\rvmsg=(\theta_1,\ldots,\theta_r)$ consists of $r$ sub-messages,
where $\theta_j$ is mapped to a PAM point
$a_j(\theta_j)\in\mathcal A_{j,n}$. We denote the
resulting PAM vector by $\mathbf a(\rvmsg):=(a_1(\theta_1),\ldots,a_r(\theta_r))^\top$.

The scheme is described in the following algorithm.
\begin{algorithm}[H]
\caption{Capacity-achieving SK scheme}
\label{alg:scheme}
\begin{algorithmic}[1]
\Require Blocklength $n$, message $\rvmsg$, and parameters
$(\Gamma,\Sigmahat,W,J,K_p,K_Y)$

\State \textbf{Message embedding:}
\State Compute 
\begin{align}
    \rvx^d\gets-C^\dagger W \mathbf{a}(\rvmsg), \quad C:=
    \left[(F-K_pH)^{d-1}K_p\ \cdots\ K_p\right].
    \label{eq:def_C}
\end{align}
\For{$i=1,\ldots,d$}
    \State Encoder transmits $\rvx_i$ and updates $\shat_{i+1}$ using
    \eqref{eq:obs_form}.
    \State Both terminals compute $\shathat_{i+1}$ using
    \eqref{eq:embedding_decoder}.
\EndFor

\State \textbf{Refinement:} 
\State $\mathbf w_{d+1}\gets0$.
\For{$i=d+1,\ldots,n$}
    \State Encoder transmits
    $
    \rvx_i=\Gamma\Sigmahat^\dagger(\shat_i-\shathat_i)
    $
    and updates $\shat_{i+1}$ according to \eqref{eq:obs_form}.
    \State Decoder updates   \begin{align*}
        \mathbf w_{i+1}
        &=
        \mathbf w_i
        -J^{-(i-d)}W^\dagger(K_p-K_Y)
        (\rvy_i-H\shathat_i).
        \end{align*}
    \State Both terminals update $\shathat_{i+1}$ according to
    \eqref{eq:shathat_recursion}.

\EndFor

\State \textbf{Decision:}
$\displaystyle
\hat\theta_j
=
\arg\min_{\theta_j}
|[\mathbf w_{n+1}]_j-a_j(\theta_j)|,\quad j=1,\ldots,r$.
\State \Return $\hat\rvmsg=(\hat\theta_1,\ldots,\hat\theta_r)$.
\end{algorithmic}
\end{algorithm}

\begin{proposition}[Capacity-achieving coding scheme]\label{prop:scheme}
For any $R<C_{\mathrm{FB}}(P)$, Algorithm~1 achieves rate $R$ for all
sufficiently large $n$. Moreover, there exist constants $a,b,c>0$,
independent of $n$, such that
\begin{align}
    P_{e,\max}^{(n)}
    \le
    a\exp\{-b\,2^{cn}\}.
\end{align}
\end{proposition}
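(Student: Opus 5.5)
We analyze Algorithm~\ref{alg:scheme} by tracking the mismatch $\Delta_i$ of \eqref{eq:def_delta} through both phases. The refinement recursion turns out to be linear, so the decoder's statistic $\mathbf w_{n+1}$ is the PAM vector plus a Gaussian error whose covariance decays exponentially; rate, power and error probability then follow.

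\emph{Embedding phase.} For $i\le d$ the encoder's estimate follows \eqref{eq:obs_form}, so $\shat_{i+1}=(F-K_pH)\shat_i+K_p\rvz_i$. The decoder follows \eqref{eq:embedding_decoder}, so $\shathat_{i+1}=(F-K_pH)\shathat_i+K_p(\rvx_i+\rvz_i)$. Subtracting gives $\Delta_{i+1}=(F-K_pH)\Delta_i-K_p\rvx_i$, and with $\Delta_1=0$ we get $\Delta_{d+1}=-C\rvx^d$. With $\rvx^d=-C^\dagger W\mathbf a$ this gives $\Delta_{d+1}=CC^\dagger W\mathbf a$. To conclude $\Delta_{d+1}=W\mathbf a$ we need $\Range(W)\subseteq\Range(C)$. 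Note that $C$ is the controllability matrix of $(F-K_pH,K_p)$, and $\Range(W)=\Range(U)=\Range(\Sigmahat)$. I would establish the inclusion by showing that $\Range(\Sigmahat)$ lies in the reachable subspace of $(F-K_pH,K_p)$. This is plausible because $\Sigmahat$ is a covariance of a process driven by $(K_p-K_Y)\rve_i$ through dynamics of the form $F-K_Y(\cdot)$, and feedback does not change reachable subspaces. The energy $\|\rvx^d\|^2$ is bounded uniformly in $n$, since $\mathbf a\in[-1/2,1/2]^r$, so its contribution to the average power vanishes.

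\emph{Refinement phase.} By \eqref{eq:noise_driven_difference},
\begin{align*}
\Delta_{i+1}=F_Y\Delta_i+(K_p-K_Y)\rve_i,\qquad F_Y:=F-K_Y(H+\Gamma\Sigmahat^\dagger).
\end{align*}
The decoder observes $\rvy_i-H\shathat_i=(H+\Gamma\Sigmahat^\dagger)\Delta_i+\rve_i$. Using $F_Y=F_p+(K_p-K_Y)(H+\Gamma\Sigmahat^\dagger)$, this gives $\Delta_{i+1}=F_p\Delta_i+(K_p-K_Y)(\rvy_i-H\shathat_i)$.

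Lemma~\ref{lem:active_modes} is invoked here. It shows that $\Range(\Sigmahat)$ is invariant under $F_p$ and under $K_p-K_Y$, so we may write $\Delta_i=W\mathbf c_i$ with $\mathbf c_{i+1}=J\mathbf c_i+W^\dagger(K_p-K_Y)(\rvy_i-H\shathat_i)$. Unrolling gives
\begin{align*}
J^{-(n+1-d)}\mathbf c_{n+1}=\mathbf a+\sum_{i=d+1}^{n}J^{-(i-d+1)}W^\dagger(K_p-K_Y)(\rvy_i-H\shathat_i).
\end{align*}
The decoder's update is designed so that $\mathbf w_{n+1}$ equals $\mathbf a$ minus a term of this form. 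Up to an index shift absorbed into the convention for $J$, this gives $\mathbf w_{n+1}=\mathbf a-J^{-(n-d)}\mathbf c_{n+1}$.

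The error is therefore $J^{-(n-d)}\Delta$-type. The stationary phase is stable on $\Range(\Sigmahat)$ by Lemma~\ref{lem:active_modes}, so $\mathbf c_{n+1}$ is Gaussian with bounded covariance: it is driven by the bounded initial condition and by innovations. Hence the $j$th coordinate of the error is Gaussian with standard deviation at most $\kappa\, n^{m}\rho_j^{-n}$, where $m$ accounts for Jordan blocks. Because $\mathbf a$ enters deterministically, this bound holds uniformly over messages, which gives a maximal-error rather than average-error bound.

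\emph{Error probability and power.} PAM spacing is $\delta_j\approx 2^{-nR_j}$, and an error in coordinate $j$ requires $|\text{err}_j|>\delta_j/2$. The Gaussian tail gives
\begin{align*}
\Pr(\text{err}_j)\le 2\exp\!\Bigl(-\tfrac{1}{8\kappa^2}\,n^{-2m}\,2^{2n(\log\rho_j-R_j)}\Bigr).
\end{align*}
Since $R_j<\log\rho_j$, a union bound over $j$ yields $a\exp\{-b2^{cn}\}$. The rate is $\sum_jR_j=R$ by Lemma~\ref{lem:active_modes}.

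For power, in the refinement phase $\mathbb E\rvx_i^2=\Gamma\Sigmahat^\dagger\mathbb E[\Delta_i\Delta_i^\top]\Sigmahat^\dagger\Gamma^\top$. The covariance of $\Delta_i$ converges to $\Sigmahat$ plus the transient from $W\mathbf a$; the transient is bounded, but careful, it is not decaying, since $F_Y$ restricted to $\Range(\Sigmahat)$ is stable. Thus $\frac1n\sum\mathbb E\rvx_i^2\to\Gamma\Sigmahat^\dagger\Gamma^\top=P$ by Theorem~\ref{thm:main}. Since this only converges to $P$, I would meet the constraint exactly by running the scheme at power $P-\epsilon$ and using continuity of $C_{\mathrm{FB}}$.

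\emph{Main obstacle.} The main obstacle is verifying the invariance facts: $\Range(\Sigmahat)$ is $F_p$-invariant, contains $\Range(K_p-K_Y)$, and lies in $\Range(C)$. I would try to derive all three from the Riccati-tight identity \eqref{eq:refinement_Riccati} together with Lemma~\ref{lem:active_modes}.
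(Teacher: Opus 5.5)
Your outline follows the paper's architecture: embedding into $\Delta_{d+1}$, the identity $\mathbf w_{n+1}=\mathbf a(\theta)-J^{-(n-d)}W^\dagger\Delta_{n+1}$, Schur dynamics $J_Y$ on $\Range(\Sigmahat)$, and PAM spacing $2^{-nR_j}$ against $\rho_j^{-n}$. However, the step you flag yourself, $\Range(W)\subseteq\Range(C)$, is left open, and the reason you give does not establish it. The reachable-subspace heuristic only reduces the question to an instance of itself. $\Range(C)$ is the reachable subspace of $(F,K_p)$, so it is invariant under $F$ and under $F_p=F-K_p\tilde H$, where $\tilde H:=H+\Gamma\Sigmahat^\dagger$. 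Since $F-K_Y\tilde H=F_p+(K_p-K_Y)\tilde H$, the reachable subspace of $(F-K_Y\tilde H,K_p-K_Y)$ lies in $\Range(C)$ exactly when $K_p-K_Y=-F_p\Sigmahat\tilde H^\top\Psi_Y^{-1}\in\Range(C)$. That vector lies in $F_p\Range(\Sigmahat)=\Range(\Sigmahat)$, so this is a special case of the inclusion you want to prove. The Riccati identity alone does not force the inclusion either. Writing bars for compressions to $\Range(C)^\perp$, where $K_p$ vanishes, the DARE becomes $\bar\Sigma=\bar F\bar\Sigma\bar F^\top-\bar K_Y\Psi_Y\bar K_Y^\top$, which allows $\bar\Sigma\neq0$ when $\bar F$ is unstable.

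The missing idea is a stability comparison. The paper takes $v\perp\Range(C)$, so $K_p^\top(F_0^\top)^kv=0$ with $F_0:=F-K_pH$, and hence $(F_p^\top)^kv=(F_0^\top)^kv$. Combined with $F_pW=WJ$, this gives $W^\top v=(J^{-k})^\top W^\top(F_0^\top)^kv\to0$, because $\rho(F_0)\le1$ (maximal DARE solution) while $J$ is antistable. A route closer to your plan also works. For $v\perp\Range(C)$, the top-left block of the Riccati LMI and the $F^\top$-invariance of $\Range(C)^\perp$ give $v^\top\Sigmahat v\le(F^\top v)^\top\Sigmahat F^\top v\le\cdots\to0$ by Assumption~1.

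Three smaller points.
\begin{itemize}
\item The conditional error $[\mathbf w_{n+1}]_j-a_j(\theta_j)$ has nonzero mean $\mu_{j,n}=-e_j^\top J^{-(n-d)}J_Y^{n-d}\mathbf a(\theta)$. The tail bound must therefore use $d_{j,n}/2-|\mu_{j,n}|$ rather than $d_{j,n}/2$. This is harmless, since $|\mu_{j,n}|$ satisfies the same $n^{r-1}\rho_j^{-(n-d)}$ bound.
\item Your unrolled identity has an index slip. With $\mathbf c_{d+1}=\mathbf a$ it reads $J^{-(n-d)}\mathbf c_{n+1}=\mathbf a+\sum_{i=d+1}^{n}J^{-(i-d)}W^\dagger(K_p-K_Y)(\rvy_i-H\shathat_i)$, which matches the decoder update exactly with no re-indexing.
\item For power, the message-dependent term $\Gamma\Sigmahat^\dagger WJ_Y^{k}\mathbf a(\theta)$ does decay geometrically, because $J_Y$ is Schur; your remark that it does not is a slip. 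The paper combines this decay with $\Xi_k\preceq W^\dagger\Sigmahat W^{\dagger\top}$ to bound the total energy by $(n-d)P+K$, then applies $O(1)$ zero-padding. Your back-off to $P-\epsilon$ with continuity of $C_{\mathrm{FB}}$ is also valid. It does, however, run Algorithm~1 with an optimizer for $P-\epsilon$ instead of one for $P$.
\end{itemize}
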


\section{Proofs}\label{sec:proofs}
In this section, we prove Theorem \ref{thm:main} and Proposition \ref{prop:scheme}. 

\subsection{Notation}
Throughout the proof of Theorem~\ref{thm:main}, we express
the optimization \eqref{eq:SKH_scalar} in terms of the single decision variable
\begin{align}
    Q:=
    \begin{pmatrix}
        p & \Gamma\\
        \Gamma^\top & \Sigmahat
    \end{pmatrix}
    \succeq0.
\end{align}
Since $Q$ is a decision variable, we introduce the power constraint $\operatorname{Tr}\!\left(
Q
\begin{pmatrix}
1&0\\
0&0
\end{pmatrix}
\right)
\le P$. 
The objective can be expressed as
\begin{align}
\Psi_Y(Q)
&=
\begin{pmatrix}
1&H
\end{pmatrix}
Q
\begin{pmatrix}
1&H
\end{pmatrix}^\top
+\Psi,
\end{align}
and the Riccati LMI is written as
\begin{align}
\mathcal L(Q)
=
AQA^\top-BQB^\top+C\Psi C^\top
\succeq0,
\end{align}
with
\begin{align}
A=
\begin{pmatrix}
0&F\\
1&H
\end{pmatrix},
\qquad
B=
\begin{pmatrix}
0&I\\
0&0
\end{pmatrix},
\qquad
C=
\binom{K_p}{1}.
\end{align}
The corresponding gain is
$K_Y(Q):=(F\Gamma^\top+F\Sigmahat H^\top+K_p\Psi)\Psi_Y(Q)^{-1}$.

This formulation is equivalent to \eqref{eq:SKH_scalar}.
Indeed, if $p<P$, increasing $p$ to $P$ preserves $Q\succeq0$ and
increases only the lower-right entry of $\mathcal L(Q)$, preserving its positive semidefiniteness. Since $\Psi_Y(Q)$ strictly increases,
every optimizer satisfies $p=P$.

We also use the following shorthands for Schur complements. 
\begin{align}
\Schur(\mathcal L(Q))&:=
 F \Sigmahat F^\top+K_p\Psi K_p^\top-\Sigmahat - K_Y \Psi_YK_Y^\top  \label{eq:def_Riccati_shat}\\
 \Schur(Q)&:=p -\Gamma\Sigmahat^\dagger\Gamma^\top.\label{eq:def_schur}
\end{align}

\subsection{Proof of Theorem \ref{thm:main}}
The following lemmas are the main technical steps to prove Theorem \ref{thm:main}. 
\begin{lemma}\label{lem:schur_tight}
   If an optimal solution $Q$ satisfies $\Schur(\mathcal L(Q))=0$, then $\Schur(Q)=0$.
\end{lemma}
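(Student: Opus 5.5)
We argue by contradiction through a perturbation. Suppose $Q$ is optimal with $\Schur(\mathcal L(Q))=0$ but $M:=\Schur(Q)=p-\Gamma\Sigmahat^\dagger\Gamma^\top>0$; recall $p=P$ at any optimizer. The plan is to build a feasible $Q'$ with the same power $p$ and a strictly larger $\Psi_Y$, which contradicts optimality.

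The idea is to move power from the feedback-independent component $M$ into the refinement direction. Concretely, we keep $p$ fixed and perturb $(\Gamma,\Sigmahat)$ to
\[
Q_\epsilon=\begin{pmatrix} p & \Gamma+\epsilon g^\top\\ \Gamma^\top+\epsilon g & \Sigmahat+\epsilon S\end{pmatrix}.
\]
The perturbation direction $(g,S)$ will be chosen so that three things hold: $\Psi_Y$ increases to first order, $Q_\epsilon\succeq0$ for small $\epsilon$, and $\mathcal L(Q_\epsilon)\succeq0$.

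Positive semidefiniteness of $Q_\epsilon$ costs nothing for small $\epsilon$. Since $M>0$, the Schur complement of $Q$ with respect to $\Sigmahat$ has slack, so any perturbation of $\Gamma$ with $g\in\Range(\Sigmahat)$ (or with $S$ enlarged suitably) keeps $Q_\epsilon\succeq0$ for small $\epsilon$.

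Positive semidefiniteness of $\mathcal L(Q_\epsilon)$ is where the tight Riccati equation is used. Because $\Psi_Y>0$, this is equivalent to $\Schur(\mathcal L(Q_\epsilon))\succeq0$, and by hypothesis $\Schur(\mathcal L(Q))=0$. So the task is to compute the first-order variation
\[
\delta\Schur = F_Y S F_Y^\top - S - \epsilon^{-1}(\text{terms from } \delta\Gamma),\qquad F_Y:=F-K_Y(H+\cdots),
\]
where the cross terms come from differentiating $K_Y\Psi_Y K_Y^\top$. The plan is to pick $g$ first and then solve a Lyapunov-type equation for $S$ so that the variation is $\succeq0$. Lemma-type stability of the closed-loop dynamics (cf.\ \eqref{eq:noise_driven_difference}) on $\Range(\Sigmahat)$ is what makes this Lyapunov equation solvable.

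To pick $g$, we linearize the objective: $\delta\Psi_Y=2\epsilon Hg+\epsilon HSH^\top$. A natural alternative, closer to the operational picture, is to realize the perturbation through the input law itself: replace $\rvx_i$ by $\rvx_i+\epsilon\,(\text{a linear function of }\rvm_i\text{ fed back into the state})$. Equivalently, we treat part of the white component $\rvm_i$ as feedback-dependent, for instance by reusing the past $\rvm_{i-1}$ that the decoder has not fully resolved. We then check that the induced $\Sigmahat$ is a Lyapunov solution, so $\mathcal L\succeq0$ holds automatically. Since $\rvm$ is independent white noise, sending a component correlated with the decoder's error in $\rvm_{i-1}$ should strictly increase $\Psi_Y$, by the same mechanism by which feedback beats no feedback. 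This holds unless $H$ annihilates the relevant direction, and the nonwhite Assumption~\ref{ass:non_white} together with $F$ Schur should exclude that degenerate case.

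The main obstacle is showing that the first-order gain in $\Psi_Y$ is strictly positive once the Lyapunov correction $S$ is accounted for. The increase in $\Sigmahat$ feeds back into $K_Y\Psi_YK_Y^\top$, and a cancellation is conceivable. I would first try augmenting the state with one extra scalar coordinate carrying $\rvm_{i-1}$. In the augmented problem the old optimizer is still feasible, and the KKT conditions of \eqref{eq:SKH_scalar} then force the multiplier of $\mathcal L$ to be orthogonal to this new direction. Showing that this forces $M=0$ reduces the strict-gain question to a scalar computation, which I expect to go through using $M>0$ and $R>0$.
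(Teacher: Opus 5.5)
Your proposal has a genuine gap. It never fixes the perturbation direction $(g,S)$, and each of the three properties you need is either deferred to ``the main obstacle'' or justified incorrectly.

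\begin{itemize}
\item \emph{The Riccati LMI.} You handle it through its nonlinear Schur complement plus a first-order Lyapunov correction. But $\mathcal L$ is affine in $Q$, so $\mathcal L(Q+\epsilon\Delta Q)=\mathcal L(Q)+\epsilon(A\Delta QA^\top-B\Delta QB^\top)$ holds exactly. Choosing $\Delta Q$ so that this increment is PSD removes any worry about cancellations. By contrast, a first-order variation that is merely $\succeq0$ on the $d$-dimensional kernel of $\mathcal L(Q)$ would not by itself guarantee feasibility.
\item \emph{Circularity.} You justify solvability of your Lyapunov equation by stability of \eqref{eq:noise_driven_difference} on $\Range(\Sigmahat)$. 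Both that recursion and its stability (Lemma~\ref{lem:active_modes}) are derived under $M=0$, which is exactly what you are trying to prove.
\item \emph{Positivity of $Q_\epsilon$.} The claim that $Q_\epsilon\succeq0$ ``costs nothing'' fails in precisely the relevant regime. The top-left block of the LMI increment is $FSF^\top-S$. If it is $\succeq0$ with $F$ Schur, then $S=-\sum_{k\ge0}F^k(FSF^\top-S)(F^\top)^k\preceq0$. So a clean direction must \emph{decrease} $\Sigmahat$, and positivity of $Q_\epsilon$ on $\ker\Sigmahat$ becomes the real difficulty, not a free consequence of $M>0$.
\item \emph{Fallback.} The KKT/state-augmentation idea is not carried out and does not by itself yield $M=0$.
\end{itemize}

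The missing idea is an explicit direction. With $\Xi=F\Xi F^\top+K_pK_p^\top$, take $g=\Xi H^\top$ and $S=-\Xi$, i.e.\ $\Delta Q=\begin{pmatrix}0&H\Xi\\ \Xi H^\top&-\Xi\end{pmatrix}$.
\begin{itemize}
\item A direct computation gives the LMI increment $\mathrm{diag}(K_pK_p^\top,H\Xi H^\top)\succeq0$, and the power is unchanged.
\item $\Psi_Y$ increases by exactly $\epsilon H\Xi H^\top=\epsilon\sum_k(HF^kK_p)^2$. This is positive, since otherwise the stationary noise would equal its white innovation, contradicting Assumption~\ref{ass:non_white}.
\item For $Q_\epsilon\succeq0$, show $\ker Q\subseteq\ker\Delta Q$. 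The condition $\Schur(Q)>0$ gives $\ker Q=\{(0,v):v\in\ker\Sigmahat\}$. Tightness gives $\mathcal L(Q)=\Psi_Y\binom{K_Y}{1}\binom{K_Y}{1}^\top$. Evaluating $\mathcal L(Q)$ at $(v,-K_Y^\top v)$ for $v\in\ker\Sigmahat$ then forces $K_p^\top v=0$ and $F^\top v\in\ker\Sigmahat$, hence $\Xi v=0$. Strict positivity of $Q$ on $\ker(Q)^\perp$ then survives for small $\epsilon$.
\end{itemize}
Note that the hypothesis $\Schur(\mathcal L(Q))=0$ is needed for this kernel invariance, not to control the LMI as in your plan.
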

\begin{lemma}\label{lem:maximal_solution}
Let $Q$ be feasible. Then there exists a feasible $Q^\star\succeq Q$,
with the same $p$ and $\Gamma$, such that
$\Schur(\mathcal L(Q^\star))=0$.
\end{lemma}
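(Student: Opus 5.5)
The plan is to fix $p$ and $\Gamma$ and view the Riccati LMI as a condition on $\Sigmahat$ alone. Then I build $Q^\star$ by monotone fixed-point iteration of the associated Riccati map, starting from the given $\Sigmahat$. Since $p\ge 0$ is unchanged, the power constraint is automatically preserved.

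\textbf{Step 1: a monotone Riccati map.} Since $\Psi_Y(Q)\ge\Psi>0$, a Schur complement argument shows that $\mathcal L(Q)\succeq0$ is equivalent to $\Schur(\mathcal L(Q))\succeq 0$, i.e., to $\Sigmahat\preceq g(\Sigmahat)$. Here, with $p,\Gamma$ fixed,
\[
g(X):=FXF^\top+K_p\Psi K_p^\top-K_Y\Psi_YK_Y^\top ,
\]
where $K_Y$ and $\Psi_Y$ are evaluated at $Q_X:=\begin{pmatrix}p&\Gamma\\ \Gamma^\top&X\end{pmatrix}$. Completing the square, using $\Psi_Y>0$, gives the variational representation
\[
g(X)=\min_{K}\;\begin{pmatrix}I&-K\end{pmatrix}\bigl(AQ_XA^\top+C\Psi C^\top\bigr)\begin{pmatrix}I&-K\end{pmatrix}^\top ,
\]
with the minimum attained at $K=K_Y(Q_X)$. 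The term $-BQ_XB^\top$ contributes only $-X$ to the upper-left block, which is why $\Schur(\mathcal L(Q_X))=g(X)-X$. For each fixed $K$, the map $X\mapsto \begin{pmatrix}I&-K\end{pmatrix}(AQ_XA^\top+C\Psi C^\top)\begin{pmatrix}I&-K\end{pmatrix}^\top$ is affine and nondecreasing in the Loewner order, because $Q_X$ is nondecreasing in $X$. Hence $g$, as a pointwise minimum of such maps, is nondecreasing: $X\preceq Y$ implies $g(X)\preceq g(Y)$. It is also continuous on $\mathbb S^d_+$.

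\textbf{Step 2: monotone iteration and an upper bound.} Set $X_0:=\Sigmahat$ and $X_{k+1}:=g(X_k)$. Feasibility gives $X_0\preceq g(X_0)=X_1$, and by monotonicity and induction $X_k\preceq X_{k+1}$ for all $k$. Each $Q_{X_k}\succeq Q\succeq0$, so positivity of $Q$ is preserved.

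For boundedness, choosing $K=0$ in the variational formula gives $g(X)\preceq FXF^\top+K_p\Psi K_p^\top$. Iterating this bound yields
\[
X_k\preceq F^kX_0(F^k)^\top+\sum_{t<k}F^tK_p\Psi K_p^\top(F^t)^\top\preceq c\,I
\]
uniformly in $k$, since $\rho(F)<1$ by Assumption~\ref{ass:F_stable}.

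A nondecreasing, bounded sequence of symmetric matrices converges, so $X_k\to\Sigmahat^\star\succeq\Sigmahat$. By continuity of $g$, the limit satisfies $\Sigmahat^\star=g(\Sigmahat^\star)$, i.e., $\Schur(\mathcal L(Q^\star))=0$ with $Q^\star:=Q_{\Sigmahat^\star}$.

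\textbf{Step 3: feasibility of $Q^\star$.} We have $Q^\star\succeq Q\succeq 0$, and $p$ and $\Gamma$ are unchanged, so the power constraint still holds. Moreover $\Psi_Y(Q^\star)>0$ and the Schur complement of $\mathcal L(Q^\star)$ is zero, hence $\mathcal L(Q^\star)\succeq0$. Thus $Q^\star$ is feasible.

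The main point requiring care is Step 1: verifying that $K_Y$ is exactly the minimizer in the completed square, so that $g$ coincides with the Schur complement expression \eqref{eq:def_Riccati_shat} and is monotone. This reduces to the standard identity for minimizing a quadratic in $K$ with the positive scalar weight $\Psi_Y$. Once monotonicity is in hand, the remaining steps are routine.
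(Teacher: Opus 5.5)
Your proof is correct, but it takes a different route from the paper's. The paper shifts to $N(Z)=\mathcal L(Q_{\Sigmahat+Z})$ and notes that $N(0)=\mathcal L(Q)\succeq0$. It then invokes the maximal-solution theorem for Riccati equations in \cite[Appendix~E, Lemma~E.3.2]{KailathSayedHassibi2000} to obtain $Z_+\succeq0$ with $\Schur(N(Z_+))=0$. The cited result also makes $\Sigmahat+Z_+$ the \emph{maximal} solution, dominating every $X$ with $\Schur(\mathcal L(Q_X))\succeq0$ for the given $(p,\Gamma)$.

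You instead prove existence from scratch. The completed-square representation of $g$ gives Loewner monotonicity. The choice $K=0$ together with $\rho(F)<1$ gives a uniform upper bound. The monotone iteration started at the subsolution $\Sigmahat$ then converges to a fixed point. What this buys is a self-contained, constructive argument: it gives an explicit iteration for computing $Q^\star$, does not rely on matching the hypotheses of an external Riccati theorem, and uses Assumption~\ref{ass:F_stable} transparently.

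The price is that your limit is the \emph{smallest} fixed point of $g$ above $\Sigmahat$. Indeed, any fixed point $Y\succeq\Sigmahat$ satisfies $X_k\preceq Y$ for all $k$; in particular, if $\Sigmahat$ already solves the Riccati equation, your iteration returns it unchanged. So you do not recover the paper's additional maximality claim, and your $Q^\star$ depends on the starting $\Sigmahat$ rather than only on $(p,\Gamma)$. This is harmless here. The lemma as stated, the proof of Theorem~\ref{thm:main}, and Lemma~\ref{lem:active_modes} only need, besides optimality of $Q$, that $Q^\star\succeq Q$ is feasible and Riccati-tight.

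One small correction: $g$ is not defined on all of $\mathbb S^d_+$. The quantity $\Psi_Y(Q_X)=(1\;\;H)Q_X(1\;\;H)^\top+\Psi$ can be nonpositive for some $X\succeq0$ with $Q_X\not\succeq0$. You should state monotonicity and continuity on $\{X\succeq\Sigmahat\}$, where $\Psi_Y(Q_X)\ge\Psi_Y(Q)\ge\Psi>0$. This set contains every iterate and the limit, so your argument goes through unchanged.
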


The proofs of the two lemmas are given below in Section \ref{subsec:lemmas}. We proceed to prove Theorem \ref{thm:main}.
\begin{proof}[Proof of Theorem \ref{thm:main}]
Let $Q$ be an optimal solution. By Lemma \ref{lem:maximal_solution}, there exists a feasible $Q^\star$ such that $Q^\star\succeq Q$ and $\Schur(\mathcal L(Q^\star))=0$. The inequality $Q^\star\succeq Q$ implies $\Psi_Y(Q^\star) \ge \Psi_Y(Q)$ and therefore $Q^\star$ is optimal as well. Consider the chain of inequalities
\begin{align}\label{eq:contradiction_th}
    0 &\le \Schur(Q) \nonumber\\
      &\stackrel{(a)}\le \Schur(Q^\star) \nonumber\\
      &\stackrel{(b)}=0,
\end{align}
where $(a)$ follows from the variational form of the generalized Schur complement $\Schur(Q) = \min_y \binom{1}{y}^\top Q \binom{1}{y}$ and the fact that $Q^\star\succeq Q$, and $(b)$ follows from Lemma \ref{lem:schur_tight} using the fact that $Q^\star$ is an optimal solution with $\Schur(\mathcal L(Q^\star))=0$. Thus, any optimal solution $Q$ satisfies $\Schur(Q)=0$.
For optimal $Q$, $\Schur(Q)=P-\Gamma\Sigmahat^{\dagger}\Gamma^\top=M$, and thus $M=0.$
\end{proof}

\begin{proof}[Proof of Lemma~\ref{lem:schur_tight}]
The proof is by contradiction. Suppose that an optimal solution $Q$ satisfies $\Schur(\mathcal L(Q))=0$ and $\Schur(Q)>0$. We construct a perturbation $Q_\varepsilon=Q+\varepsilon\Delta Q$ that remains feasible for all sufficiently small $\varepsilon>0$ and strictly increases the objective, contradicting the optimality of $Q$.

Let $\Xi$ be the unique solution of the Lyapunov equation
\begin{align}
\label{eq:eq:Xi definition}
\Xi &= F \Xi F^\top + K_pK_p^\top,
\end{align}
and for $\varepsilon>0$, define the perturbation
\begin{align}\label{eq:def_perturb}
Q_\varepsilon:= Q + \varepsilon \Delta Q,
\qquad
\Delta Q :=
\begin{pmatrix}
0 & H\Xi\\
\Xi H^\top & -\Xi
\end{pmatrix}.
\end{align}
The objective function in \eqref{eq:SKH_scalar} for the perturbed matrix is 
\begin{align}
    \Psi_Y(Q_\varepsilon) &= (1\;\;H) Q_\varepsilon (1\;\;H)^\top
 + \Psi = \Psi_Y(Q) + \varepsilon H\Xi H^\top.
\end{align}
Lemma~\ref{lem:lyapunov} shows \(H\Xi H^\top>0\). Hence
\(\Psi_Y(Q_\varepsilon)>\Psi_Y(Q)\) for all $\varepsilon>0$.

We proceed to prove that $Q_\varepsilon$ is feasible for sufficiently small $\varepsilon$. The power constraint is unchanged, since the perturbation does not affect
the top-left entry of $Q$:
\begin{align}
\operatorname{Tr}\!\left(
Q_\varepsilon
\begin{pmatrix}
1&0\\
0&0
\end{pmatrix}
\right)
=
\operatorname{Tr}\!\left(
Q
\begin{pmatrix}
1&0\\
0&0
\end{pmatrix}
\right).
\end{align}

Next, a direct computation gives
\begin{align}
\mathcal L(Q_\varepsilon)
&=
\mathcal L(Q)
+
\varepsilon
\begin{pmatrix}
\Xi-F\Xi F^\top & 0\\
0 & H\Xi H^\top
\end{pmatrix} \nonumber\\
&=
\mathcal L(Q)
+
\varepsilon
\begin{pmatrix}
K_pK_p^\top & 0\\
0 & H\Xi H^\top
\end{pmatrix},
\end{align}
where the second equality follows from \eqref{eq:eq:Xi definition}. Since
$\mathcal L(Q)\succeq0$ and $\Xi\succeq0$, it follows that
$\mathcal L(Q_\varepsilon)\succeq0$.

It remains to prove that $Q_\varepsilon\succeq0$ for sufficiently
small $\varepsilon>0$. We first show that $\ker(Q)\subseteq\ker(Q_\varepsilon)$. Let $u\in\ker(Q)$ and write $u=\binom{a}{v}$. Since $Qu=0$, we get $Q_\varepsilon u=\varepsilon\Delta Q u$. By
Lemma~\ref{lem:kernel}, $a=0$ and $\Xi v=0$. Therefore $\Delta Q u
=
\binom{H\Xi v}{a\Xi H^\top-\Xi v}
=
0.$

Now let \(\mathcal R:=\ker(Q)^\perp\). Since \(Q\succeq0\), the restriction of \(Q\) to \(\mathcal R\) is positive definite. Hence there exists \(m>0\) such that \(r^\top Qr\ge m\|r\|^2\) for all \(r\in\mathcal R\). Also, since \(\Delta Q\) is fixed, there exists \(M_\triangle<\infty\) such that \(|r^\top\Delta Qr|\le M_\triangle\|r\|^2\) for all \(r\in\mathcal R\). For any vector \(x\), decomposed as \(x = u + r\), where \(u\in\ker(Q)\) and \(r\in\mathcal R\), consider  
\begin{align}
x^\top Q_\varepsilon x
&=(u+r)^\top Q_\varepsilon(u+r) \nonumber\\
&\stackrel{(a)}=r^\top Qr+\varepsilon r^\top\Delta Qr \nonumber\\
&\ge (m-\varepsilon M_\triangle)\|r\|^2,
\end{align}
where $(a)$ follows from \(Q_\varepsilon u=0\) and
the symmetry of \(Q_\varepsilon\). We can now conclude that $Q_\varepsilon$ is feasible for $\varepsilon \le \frac{m}{M_\triangle}$, contradicting the optimality of \(Q\). Therefore
\(\Schur(Q)=0\).
\end{proof}

\begin{proof}[Proof of Lemma \ref{lem:maximal_solution}]
Write $Q=\begin{pmatrix}p&\Gamma\\\Gamma^\top&\Sigmahat\end{pmatrix}$ and, for a
symmetric matrix $Z$, let $Q_Z:=\begin{pmatrix}p&\Gamma\\\Gamma^\top&Z\end{pmatrix}$.
We define
\begin{align}\label{eq:shift}
    N(Z):=\mathcal L\bigl(Q_{\,\Sigmahat + Z}\bigr).
\end{align}
The shift has placed its constant part exactly at the feasible point:
\begin{align}\label{eq:N_zero}
    N(0)=\mathcal L(Q)\succeq0 .
\end{align}

By \cite[Appendix~E, Lemma~E.3.2]{KailathSayedHassibi2000} there exists $Z_+\succeq0$ with $\Schur(N(Z_+))=0$, moreover it is the maximal element among matrices satisfying $\Schur(N(Z))\succeq0$.

Set $Q^\star:=Q_{\Sigmahat + Z_+}$. Then $Q^\star\succeq Q\succeq0$, and the objective does not decrease. The power constraint is unchanged, because $Q^\star$ and $Q$ have the same upper-left entry. Finally
$\mathcal L(Q^\star)=N(Z_+)\succeq0$, so $Q^\star$ is feasible, and
$\Schur(\mathcal L(Q^\star))=\Schur(N(Z_+))=0$.
Moreover, $\Sigmahat + Z_+$ is the maximal solution to the equation $\Schur(\mathcal L(Q_Z))=0$.
\end{proof}

\subsection{Technical lemmas for Theorem \ref{thm:main}}\label{subsec:lemmas}

\begin{lemma}
\label{lem:lyapunov}
Let $\Xi$ be the unique solution of the Lyapunov equation
\begin{align}\label{eq:Lyapunov_lemma}
\Xi = F \Xi F^\top + K_pK_p^\top.    
\end{align}
Then $H\Xi H^\top >0$.
\end{lemma}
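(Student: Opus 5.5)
Since $F$ is Schur (Assumption~\ref{ass:F_stable}), the Lyapunov equation \eqref{eq:Lyapunov_lemma} has the unique solution
\begin{align*}
\Xi=\sum_{k\ge0}F^kK_pK_p^\top (F^\top)^k\succeq0,
\end{align*}
and therefore
\begin{align*}
H\Xi H^\top=\sum_{k\ge0}\bigl(HF^kK_p\bigr)^2 .
\end{align*}
This sum is nonnegative, and it vanishes if and only if $HF^kK_p=0$ for every $k\ge0$. The plan is to argue by contradiction: assume $HF^kK_p=0$ for all $k$ and show that the stationary noise is then white, contradicting Assumption~\ref{ass:non_white}.

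The key step uses the steady-state observer form \eqref{eq:obs_form}. In steady state,
\begin{align*}
\rvz_i=\rve_i+\sum_{k\ge1}HF^{k-1}K_p\rve_{i-k},
\end{align*}
where $\{\rve_i\}$ is i.i.d.\ $\mathcal N(0,\Psi)$ and the sum converges because $\rho(F)<1$. Equivalently, the innovations representation gives the spectral factorization
\begin{align*}
S_Z(e^{j\omega})=\Psi\,\bigl|1+H(e^{j\omega}I-F)^{-1}K_p\bigr|^2 .
\end{align*}
The Markov parameters $HF^{k-1}K_p$ are exactly the impulse-response coefficients of this factor. If they all vanish, then $\rvz_i=\rve_i$ is white with PSD $\Psi$, which contradicts Assumption~\ref{ass:non_white}. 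Hence some $HF^kK_p\neq0$, and so $H\Xi H^\top>0$.

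The main obstacle is to justify that the stationary limit of the original model \eqref{eq:ss_output} is the same process as the one produced by the steady-state innovations model. In other words, we need the second-order statistics of $\{\rvz_i\}$ to be those generated by $(F,H,K_p,\Psi)$ at the maximal DARE solution. I would handle this through the standard Kalman-filter fact \cite{Kailath2000}: the innovations model with the stabilizing/maximal Riccati solution reproduces the output covariance sequence $\operatorname{Cov}(\rvz_{i+k},\rvz_i)$ of the stationary process. Concretely, the autocovariances at lags $k\ge1$ of the innovations model are $HF^{k-1}(F\Pi_e H^\top+K_p\Psi)$, where $\Pi_e$ is its state covariance, which equals the covariance of the estimate $\shat$. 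If all $HF^kK_p$ vanish, then $H\Pi_e H^\top=0$ and $\Pi_e H^\top=0$, so these lagged covariances vanish and the process is white. This equivalence of covariances under $\Sigma_i\to\Sigma$ is the only nontrivial ingredient; the rest is the series computation above.
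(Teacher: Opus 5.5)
Your proposal is correct and follows the paper's proof essentially step for step. Like the paper, you expand $\Xi=\sum_{k\ge0}F^kK_pK_p^\top(F^\top)^k$, reduce the claim to showing $HF^kK_p\neq0$ for some $k$, and derive a contradiction with Assumption~\ref{ass:non_white} from the observer-form expansion of $\rvz_i$ in the innovations. The only difference is that you explicitly check, through the autocovariances of the innovations model, that the steady-state model reproduces the second-order statistics of the stationary noise; the paper instead iterates the steady-state observer form \eqref{eq:obs_form} from $\shat_1$ and lets the transient $HF^{i-1}\shat_1$ vanish.
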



\begin{proof}[Proof of Lemma \ref{lem:lyapunov}]
The equation in \eqref{eq:Lyapunov_lemma} is a Lyapunov equation with a Schur
stable $F$. Thus, it admits a unique positive semidefinite solution, given by
\begin{align}\label{eq:LYA_sol}
\Xi=\sum_{k=0}^{\infty}F^k K_pK_p^\top(F^\top)^k.
\end{align}
Write
\begin{align}
    H\Xi H^\top= \sum_{k=0}^{\infty}HF^kK_pK_p^\top(F^\top)^k H^\top,
\end{align}
so it suffices to exhibit some $k\ge0$ with $HF^kK_p\neq 0$.

Consider the steady-state observer form \eqref{eq:obs_form}. Iterating the state
recursion from the initial time and substituting into the output equation gives,
for every $i\ge1$,
\begin{align}
\rvz_i = HF^{i-1}\shat_1 + \rve_i + \sum_{k=0}^{i-2} HF^kK_p\,\rve_{i-k-1}.
\end{align}
Under the assumption $HF^kK_p=0$ for all $k\ge0$, the sum vanishes and, since $F$
is Schur stable, the transient $HF^{i-1}\shat_1\to0$ as $i\to\infty$. Hence
$\rvz_i-\rve_i\to0$, so the stationary limit of $\{\rvz_i\}$ is the white process
$\{\rve_i\}$, contradicting Assumption~\ref{ass:non_white}. Therefore
$HF^kK_p\neq0$ for some $k\ge0$, and we have $H\Xi H^\top>0$.
\end{proof}

\begin{lemma}
\label{lem:kernel}
Let $Q\succeq0$ satisfy $\Schur(Q)>0$ and $\Schur (\mathcal L(Q))=0$. Let $\Xi$ be defined by \eqref{eq:eq:Xi definition}. If a scalar $a$ and a vector $v$ satisfy $Q \binom{a}{v} = 0$, then $a=0$ and $\Xi v=0$.
\end{lemma}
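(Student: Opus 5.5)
The plan is to use the two Schur-complement hypotheses in turn. First, $\Schur(Q)>0$ forces $a=0$. Second, the Riccati equality $\Schur(\mathcal L(Q))=0$ together with a Cauchy--Schwarz equality case shows that $\ker\Sigmahat$ is $F^\top$-invariant and orthogonal to $K_p$. This gives $\Xi v=0$ through the series representation \eqref{eq:LYA_sol}.

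\emph{Step 1 ($a=0$).} Suppose $Q\binom{a}{v}=0$ with $a\neq0$. Scaling, we may take $a=1$, so $\binom{1}{v}^\top Q\binom{1}{v}=0$. The variational form $\Schur(Q)=\min_y\binom{1}{y}^\top Q\binom{1}{y}$ (already used in the proof of Theorem~\ref{thm:main}) then gives $\Schur(Q)\le 0$, a contradiction. Hence $a=0$. The two block rows of $Q\binom{0}{v}=0$ then read $\Gamma v=0$ and $\Sigmahat v=0$. I will record the conclusion in the following form: every kernel vector of $Q$ has zero first coordinate, and for such vectors the kernel condition is simply $\Sigmahat v=0$ (since $Q\succeq0$, $\Sigmahat v=0$ already implies $\Gamma v=0$).

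\emph{Step 2 (kernel invariance).} Take $v$ with $\Sigmahat v=0$ and set $w:=F^\top v$ and $k:=v^\top K_p$. Sandwiching the Riccati equality \eqref{eq:def_Riccati_shat} between $v^\top$ and $v$ gives
\begin{align*}
w^\top\Sigmahat w+\Psi k^2=\Psi_Y (v^\top K_Y)^2=\frac{\bigl(v^\top(F\Gamma^\top+F\Sigmahat H^\top)+\Psi k\bigr)^2}{\Psi_Y}.
\end{align*}
The key observation is that $v^\top(F\Gamma^\top+F\Sigmahat H^\top)=\binom{0}{w}^\top Q\binom{1}{H^\top}$. Consequently the numerator is the squared bilinear form of the PSD matrix $\tilde Q:=\operatorname{diag}(Q,\Psi)$, evaluated at $x:=(0,w^\top,k)^\top$ and $y:=(1,H,1)^\top$. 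Moreover $\Psi_Y=y^\top\tilde Q y$ and $w^\top\Sigmahat w+\Psi k^2=x^\top\tilde Q x$. So the displayed identity is exactly the equality case of Cauchy--Schwarz for $\tilde Q$.

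Equality in Cauchy--Schwarz means that $\tilde Q^{1/2}x$ and $\tilde Q^{1/2}y$ are parallel. Since $y^\top\tilde Qy=\Psi_Y>0$, we get $x-\lambda y\in\ker\tilde Q$ for some scalar $\lambda$. Because $\Psi>0$, the last coordinate of this relation gives $k=\lambda$, and it also gives $\binom{0}{w}-\lambda\binom{1}{H^\top}\in\ker Q$. This kernel vector has first coordinate $-\lambda$, so Step 1 forces $\lambda=0$. Therefore $k=v^\top K_p=0$ and $\Sigmahat F^\top v=0$.

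\emph{Step 3 (conclusion).} Step 2 shows that $\ker\Sigmahat$ is invariant under $F^\top$ and is orthogonal to $K_p$. By induction, $v^\top F^kK_p=0$ for all $k\ge0$. Substituting into \eqref{eq:LYA_sol} gives $v^\top\Xi v=0$, and since $\Xi\succeq0$ this yields $\Xi v=0$.

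The main obstacle is Step 2: recognizing the Riccati identity on $\ker\Sigmahat$ as a tight Cauchy--Schwarz inequality, and then using Step 1 a second time to rule out $\lambda\neq0$. The remaining steps are routine.
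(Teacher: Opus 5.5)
Your proof is correct and is essentially the paper's argument. The paper also obtains $a=0$ from $\Schur(Q)>0$, by completing the square. It then uses the rank-one factorization $\mathcal L(Q)=\Psi_Y\binom{K_Y}{1}\binom{K_Y}{1}^\top$, takes the kernel vector $(v,-K_Y^\top v)$, and splits off the nonnegative terms $AQA^\top$ and $C\Psi C^\top$; this is exactly your Cauchy--Schwarz equality case, since your $x$ and $y$ are the images of $(v,0)$ and $(0,1)$ under $\binom{A^\top}{C^\top}$ and your multiplier is $\lambda=K_Y^\top v$. The finish via $F^\top$-invariance of $\ker\Sigmahat$, $K_p^\top v=0$, and the series \eqref{eq:LYA_sol} is identical.
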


\begin{proof}[Proof of Lemma \ref{lem:kernel}]
We first prove that $\ker(Q)=\{(0,v):v\in\ker(\Sigmahat)\}$. The matrix $Q$ is PSD and thus for any scalar $a$ and vector $v$, we can use the completion of the square to write 
\begin{align}\label{eq:kernel}
\binom{a}{v}^{\top}Q\binom{a}{v}
&=
\Schur(Q)a^2
+
\bigl(v+a\Sigmahat^\dagger\Gamma^\top\bigr)^\top
\Sigmahat
\bigl(v+a\Sigmahat^\dagger\Gamma^\top\bigr).
\end{align}
Now let \(\binom{a}{v}\in\ker(Q)\). Since \(\Schur(Q)>0\) and
\(\Sigmahat\succeq0\), \eqref{eq:kernel} gives \(a=0\) and
\(v\in\ker(\Sigmahat)\). Conversely, if \(v\in\ker(\Sigmahat)\), then
also \(\Gamma v=0\), and hence \(Q\binom{0}{v}=0\). 




We proceed to show that for all $v\in\ker(\Sigmahat)$ we have both $K_p^\top v=0$ and $F^\top v\in\ker(\Sigmahat)$. This, together with the kernel characterization, and the expression for $\Xi$ in \eqref{eq:LYA_sol} completes the proof that $\Xi v=0$.

The fact that $\Schur(\mathcal L(Q))=0$ implies by \eqref{eq:def_schur} that the top-left block of $\mathcal L(Q)$ satisfies $F \Sigmahat F^\top+K_p\Psi K_p^\top-\Sigmahat = K_YK_Y^\top \Psi_Y$. Recall that $ K_Y = (F\Gamma^\top + F\Sigmahat H^\top+K_p \Psi )\Psi_Y^{-1}$, therefore we can write the factorization 
    
\[
\mathcal L(Q)=
\Psi_Y
\binom{K_Y}{1}\binom{K_Y}{1}^\top.
\]

Let $v\in\ker(\Sigmahat)$ and choose $\lambda = -{K_Y^\top v}$ so that $\begin{pmatrix}
    v\\\lambda
\end{pmatrix}\in \ker(\mathcal L (Q))$.  
We get
\begin{align}
\label{seperate_vanish}
0&=(v,\lambda)^\top\mathcal L(Q)(v,\lambda)
\nonumber \\ 
&=(v,\lambda)^\top(AQA^\top-BQB^\top + C\Psi C^\top)(v,\lambda)
 \nonumber \\ 
 &=(v,\lambda)^\top(AQA^\top + C\Psi C^\top)(v,\lambda),
\end{align}
where the last equality follows from $ B^\top (v,\lambda) = (0,v) $ and the fact that $(0,v)\in \ker(Q)$. Both terms in \eqref{seperate_vanish} are nonnegative, and thus 
\begin{align}
    (v,\lambda)^\top(AQA^\top)(v,\lambda) = 0
    \\
    (v,\lambda)^\top(C \Psi C^\top)(v,\lambda) =0
\end{align}
Compute $A^\top(v,\lambda)=(\lambda,F^\top v+H^\top\lambda)$, so we have $(\lambda,F^\top v+H^\top\lambda) \in \ker(Q)$. Therefore $\lambda=0$ and $F^\top v\in\ker(\Sigmahat)$. Next, by $C^\top(v,\lambda)=K_p^\top v+\lambda$ and $\Psi>0$, we obtain $K_p^\top v=0$. Since \(F^\top v\in\ker(\Sigmahat)\), induction gives \((F^\top)^k v\in\ker(\Sigmahat)\) for every \(k\ge0\). Since
\(K_p^\top\) annihilates \(\ker(\Sigmahat)\), we get
\(K_p^\top(F^\top)^k v=0\) for every \(k\ge0\). Recalling the solution of $\Xi$ in \eqref{eq:LYA_sol} yields $\Xi v=0$ as required. 
\end{proof}

\subsection{Proof of Proposition~\ref{prop:scheme}}\label{sec:proof_prop}
The proof relies on three lemmas. Lemma~\ref{lem:active_modes} identifies the active dynamics induced by an optimizer and relates their volume expansion to feedback capacity. Lemma~\ref{lemma:embedding} shows that the message can be embedded exactly into the encoder--decoder mismatch. Lemma~\ref{lem:refinement} characterizes the effective channel seen by the decoder during refinement. We then combine these results to prove Proposition~\ref{prop:scheme}.

\begin{lemma}[Active modes]\label{lem:active_modes}
Let $(\Gamma,\hat\Sigma)$ be an optimizer of \eqref{eq:SKH_scalar}
for which the Riccati LMI is tight. Then:
\begin{enumerate}
    \item[\emph{(i)}] $F_pW=WJ$,
    \item[\emph{(ii)}] $J$ is antistable,
    \item[\emph{(iii)}] $J_Y:=W^\dagger(F-K_Y\tilde H)W$ is Schur,
    \item[\emph{(iv)}] $C_{\mathrm{FB}}(P)=\log|\det J|$.
\end{enumerate}
\end{lemma}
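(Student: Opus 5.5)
We work with a Riccati-tight optimizer, so $M=0$ by Theorem~\ref{thm:main}, and write $\tilde H:=H+\Gamma\Sigmahat^\dagger$. The order is (i) invariance, then (iii) Schur-ness of the decoder dynamics, then (iv) the capacity identity, and finally (ii) antistability.

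\textbf{Step (i).} Riccati tightness and the factorization $\mathcal L(Q)=\Psi_Y\binom{K_Y}{1}\binom{K_Y}{1}^\top$ (from the proof of Lemma~\ref{lem:kernel}) give $\ker(\Sigmahat)$ invariant under $F^\top$ and annihilated by $K_p^\top$ and $\Gamma^\top$. Hence $\Range(\Sigmahat)$ is invariant under $F$, $K_p\in\Range(\Sigmahat)$, and $\Gamma\Sigmahat^\dagger$ vanishes on $\ker\Sigmahat$. The same holds for $K_Y=(F\Gamma^\top+F\Sigmahat H^\top+K_p\Psi)\Psi_Y^{-1}$, which lies in $\Range(\Sigmahat)$. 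So $F_p$ and $F-K_Y\tilde H$ leave $\Range(\Sigmahat)=\Range(U)$ invariant. This gives $F_pU=UA$, and therefore $F_pW=F_pUT=UAT=UTJ=WJ$.

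\textbf{Step (iii).} Restrict to $\Range(U)$ and write $\Lambda$ for the reduced covariance. With $M=0$ the input is $\Gamma\Sigmahat^\dagger\Delta_i$, so the decoder filter is a Kalman filter for the reduced system: state $U^\top\Delta$, output $\tilde H U(\cdot)+\rve_i$, and cross-correlated noise $K_p\rve_i$. Tightness says $\Lambda$ solves the corresponding DARE. On $\Range(\Sigmahat)$ the observation map $\tilde H U$ is nondegenerate, and I would argue that $\Lambda\succ0$ forces this to be the stabilizing solution, giving $\rho(J_Y)<1$. The alternative is to cite the standard fact that the maximal solution (Lemma~\ref{lem:maximal_solution}) is stabilizing.

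\textbf{Steps (iv) and (ii).} Closed-loop and open-loop dynamics are related by $F-K_Y\tilde H=F_p+(K_p-K_Y)\tilde H$. I would compute the determinant via the reduced Riccati identity $\Lambda=A_Y\Lambda A_Y^\top+(\cdots)$, or equivalently compare $\det$ of the two return-difference factorizations:
\[
\Psi_Y/\Psi=\prod|\lambda_j(A)|^2 .
\]
The intended route is the Bode/Jensen argument: the transfer function from innovations to outputs has poles at the eigenvalues of $A$, and whitening it with the Schur $J_Y$ shows that $\tfrac12\log(\Psi_Y/\Psi)$ equals the log-volume of the unstable part of $A$. The key substep is that $A$ has no stable or unit-modulus eigenvalue. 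If such a mode existed, it would carry no information, and I would contradict optimality by deleting it and redistributing the resulting slack. Concretely, I would exhibit a perturbation in the spirit of Lemma~\ref{lem:schur_tight} that shrinks $\Lambda$ along that mode and strictly increases $\Psi_Y$. This yields (ii), and then $C_{\mathrm{FB}}(P)=\log|\det J|$ gives (iv).

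\textbf{Main obstacle.} The hardest step is excluding stable or marginal modes of $A$ on $\Range(\Sigmahat)$ in (ii), together with the matching determinant identity in (iv). The invariance in (i) and the Schur-ness in (iii) are mostly bookkeeping with Riccati theory.
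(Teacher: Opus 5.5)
\textbf{Step (i) imports a conclusion that does not hold when $M=0$.} In the proof of Lemma~\ref{lem:kernel}, $F^\top v\in\ker\Sigmahat$ and $K_p^\top v=0$ are obtained by forcing the scalar $-K_Y^\top v$ to vanish. That step uses $\ker(Q)=\{(0,v):v\in\ker\Sigmahat\}$, which relies on $\Schur(Q)>0$. Here $\Schur(Q)=M=0$, so $\ker(Q)$ also contains $(1,-\Sigmahat^\dagger\Gamma^\top)$. The same computation for $v\in\ker\Sigmahat$ then yields only $K_p^\top v=K_Y^\top v$ and $F_p^\top v\in\ker\Sigmahat$.

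Your intermediate claims $K_p,K_Y\in\Range(\Sigmahat)$ and $F\Range(\Sigmahat)\subseteq\Range(\Sigmahat)$ are false in general. Take $\sstate_{i+1}=\operatorname{diag}(f_1,f_2)\sstate_i+(1,1)^\top\rvu_i$ and $\rvz_i=(h_1,0)\sstate_i+\rvu_i$, with $h_1\neq0$, $f_1\neq f_2$ and $|f_1|,|f_2|,|f_1-h_1|<1$. This is ARMA(1) noise with an extra unobserved mode, and $K_p=(1,1)^\top$. The scalar SK optimizer lifts to a Riccati-tight optimizer with $M=0$ whose $\Sigmahat$ has rank one and $\Range(\Sigmahat)=\operatorname{span}\{(\lambda I-F)^{-1}K_p\}$, where $\lambda$ is the SK closed-loop eigenvalue. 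This line neither contains $K_p$ nor is $F$-invariant.

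The invariance you actually need, under $F_p$, is true. The paper gets it, together with $F_p\mathcal X=\mathcal X$ (so $J$ is invertible), by writing the tight Riccati equation with $M=0$ as $\Sigmahat=F_p\Omega F_p^\top$, where $\Omega=\Sigmahat-\Sigmahat\tilde H^\top\Psi_Y^{-1}\tilde H\Sigmahat$, and noting $\Range(\Omega)=\Range(\Sigmahat)$. Your reduced Kalman-filter model in Step (iii), with noise $K_p\rve_i$ acting on $\Range(\Sigmahat)$, rests on the same false premise.

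\textbf{Steps (ii)--(iv) are missing the information-form identity.} Substitute $\Sigmahat=WSW^\top$, with $S:=T^{-1}\Lambda T^{-\top}\succ0$, into $\Sigmahat=F_p\Omega F_p^\top$. Using $F_pW=WJ$ and $\Psi_Y=\Psi+hSh^\top$ (from $M=0$), you get $J^\top S^{-1}J=S^{-1}+h^\top\Psi^{-1}h$ with $h:=\tilde HW$. This one identity replaces both your Bode/Jensen computation and your unspecified perturbation. It gives $|\lambda|\ge1$ for every eigenvalue of $J$ directly, so stable active modes cannot occur and need no optimality argument. Taking determinants gives $|\det J|^2=\Psi_Y/\Psi$ with no stability hypothesis, so (iv) does not depend on (ii). Your Jensen route sees only the unstable part and would presuppose (ii).

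For $|\lambda|=1$, the identity forces $\tilde Hx=0$ for $x=Wv$. Hence $Fx=F_px=\lambda x$, which already contradicts Assumption~\ref{ass:F_stable}. The paper closes this case with the explicit perturbation $(\Gamma+\varepsilon HD,\Sigmahat-\varepsilon D)$, $D=\operatorname{Re}(xx^*)$. Its feasibility hinges on exactly $FDF^\top=D$ and $\tilde HD=0$, which you never identify.

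\textbf{Step (iii) is not established.} Positive definiteness of $\Lambda$ does not by itself make a Riccati solution stabilizing. A maximal solution is in general only semi-stabilizing (closed-loop spectrum in the closed unit disk), which is precisely the case to be excluded. Moreover, the lemma concerns any Riccati-tight optimizer, not specifically the maximal one. The paper instead uses $K_p-K_Y=-F_p\Sigmahat\tilde H^\top\Psi_Y^{-1}$ to show $J_YSJ^\top=S$. Then $J_Y=SJ^{-\top}S^{-1}$ is Schur because $J$ is antistable, so (iii) must come after (ii), not before.
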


\begin{proof}[Proof of Lemma \ref{lem:active_modes}]
Recall that Theorem~\(\ref{thm:main}\) proves $M=0$ and therefore $P = \Gamma \Sigmahat^\dagger\Gamma^\top$. 

\noindent\emph{(i)} We first show that $\mathcal X: = \Range(\Sigmahat)$ is invariant under $F_p$. The DARE \eqref{eq:refinement_Riccati} can be expressed as 
\begin{align}\label{eq:Omega_DARE}
    \hat\Sigma
    &=F_p\Omega F_p^\top, \ \ \     \Omega
    :=
    \hat\Sigma-\hat\Sigma\tilde H^\top
    \Psi_Y^{-1}\tilde H\hat\Sigma,
\end{align}
where $\tilde H:=H+\Gamma\hat\Sigma^\dagger$. By the matrix inversion lemma, $\Omega
    =
    \hat\Sigma^{1/2}
    \left(
        I+\hat\Sigma^{1/2}\tilde H^\top
        \Psi^{-1}\tilde H\hat\Sigma^{1/2}
    \right)^{-1}
    \hat\Sigma^{1/2}$. Therefore $\Range(\Omega)=\Range(\hat\Sigma)=\mathcal X$, and by \eqref{eq:Omega_DARE} $\mathcal X
    = \Range(F_p\Omega F_p^\top)
    = F_p\mathcal X$.
Thus $\mathcal X$ is invariant under $F_p$. Since $\Range(U)=\mathcal X$, it follows that
$F_pU=UU^\top F_pU=UA$. Multiplying by $T$, and using $W=UT$ and $AT=TJ$ gives $F_pW=WJ$. Moreover, $F_p\mathcal X=\mathcal X$ implies that $J$ is invertible.

\noindent\emph{(ii)}
Set $S:=T^{-1}\Lambda T^{-\top}\succ0$ and $h:=\tilde HW$ and observe that $\hat\Sigma=WSW^\top$. Substituting this in \eqref{eq:Omega_DARE} together with $F_pW = WJ$ gives
\begin{align}\label{eq:temp}
    W S W^\top &= W J\left(S-Sh^\top\Psi_Y^{-1}hS\right)J^\top W^\top.
\end{align}
Multiplying \eqref{eq:temp} on the left by $W^\dagger$ and on the right
by $(W^\dagger)^\top$, using $W^\dagger W=I_r$, gives
$S=J(S-Sh^\top\Psi_Y^{-1}hS)J^\top$. Taking inverses and multiplying by $J^\top$ and $J$ gives 
\begin{align}
    J^\top S^{-1}J
    &= \left(S-Sh^\top\Psi_Y^{-1}hS\right)^{-1} \nonumber\\
    &= S^{-1}+h^\top\Psi^{-1}h,
    \label{eq:sch_info}
\end{align}
where the last equality follows from the matrix inversion lemma and
$\Psi_Y=\Psi+hSh^\top$.

For $Jv=\lambda v$ with $v\neq0$, \eqref{eq:sch_info} gives
$(|\lambda|^2-1)v^*S^{-1}v=\Psi^{-1}|hv|^2\ge0$. Since $S\succ0$, every eigenvalue of $J$ satisfies $|\lambda|\ge1$. It remains to exclude equality.

Suppose $|\lambda|=1$ and set $x:=Wv$. Then $\tilde Hx=0$ and by $(i)$, 
$F_px=\lambda x$ and $Fx=\lambda x$.
Detectability of $(F,H)$ implies $Hx\neq0$.
Let $D:=\operatorname{Re}(xx^*)\succeq0$. Then
$\Range(D)\subseteq\mathcal X$ since $x=Wv$, $\tilde HD=0$ by $\tilde Hx =0 $, and $FDF^\top=D$.

For sufficiently small $\varepsilon>0$, set
$\hat\Sigma_\varepsilon:=\hat\Sigma-\varepsilon D\succeq0$ and
$\Gamma_\varepsilon:=\Gamma+\varepsilon HD$.
Since $\tilde HD=0$, we have
$\Gamma_\varepsilon=\Gamma\hat\Sigma^\dagger\hat\Sigma_\varepsilon$.
The first LMI therefore remains feasible, with Schur complement
\begin{align*}
    P-\Gamma_\varepsilon\hat\Sigma_\varepsilon^\dagger
    \Gamma_\varepsilon^\top
    =
    \varepsilon|Hx|^2>0.
\end{align*}
Since $FDF^\top=D$, the upper-left block of the Riccati LMI is unchanged;
the changes in its off-diagonal blocks cancel.
Its lower-right entry increases to
$\Psi_{Y,\varepsilon}=\Psi_Y+\varepsilon|Hx|^2$.
Thus the perturbed pair is feasible and strictly improves the objective,
contradicting optimality. Hence $J$ is antistable.

\noindent\emph{(iii)}
We first show that $J_YSJ^\top=S$. Consider the following chain of equalities
\begin{align}\label{lemma:scheme_technical_iii}
    J_YSJ^\top &= W^\dagger(F-K_Y\tilde H)W SJ^\top \nonumber\\
    &\stackrel{(a)}=     W^\dagger(F-K_Y\tilde H)\hat\Sigma     F_p^\top(W^\dagger)^\top\nonumber\\
    &=  W^\dagger(F_p+(K_p-K_Y)\tilde H)\hat\Sigma     F_p^\top(W^\dagger)^\top\nonumber\\
    &\stackrel{(b)}=
    W^\dagger F_p\Omega F_p^\top(W^\dagger)^\top\nonumber\\
    &= W^\dagger\hat\Sigma(W^\dagger)^\top \nonumber\\
    &= S,
\end{align}
where $(a)$ follows from $J^\top=W^\top F_p^\top(W^\dagger)^\top$ by $(i)$ and
$WSW^\top=\hat\Sigma$, and $(b)$ follows from the Kalman gains' difference 
\begin{align}
    K_p-K_Y
    &= \left(K_p\Psi_Y - F\hat\Sigma\tilde H^\top - K_p \Psi\right)
    \Psi_Y^{-1} \nonumber\\
    &= \left(K_p (\Psi_Y - \Psi) - F\hat\Sigma\tilde H^\top \right)
    \Psi_Y^{-1} \nonumber\\
    &\stackrel{(\star)}=-F_p\hat\Sigma\tilde H^\top\Psi_Y^{-1},
    \label{eq:gain_difference}
\end{align}
and $(\star)$ follows from $\Psi_Y-\Psi=\tilde H\hat\Sigma\tilde H^\top$, implied by $M=0$. 

Since $S\succ0$ and $J$ is invertible,
\eqref{lemma:scheme_technical_iii} gives
$J_Y=SJ^{-\top}S^{-1}$. Hence $J_Y$ is similar to $J^{-\top}$ and is Schur by~(ii).

\noindent \emph{(iv)}
Taking determinants in \eqref{eq:sch_info} and applying the matrix
determinant lemma gives
\begin{align*}
    |\det J|^2
    &=
    \det\!\left(I_r+Sh^\top\Psi^{-1}h\right)\\
    &=
    1+\Psi^{-1}hSh^\top
    =
    \frac{\Psi_Y}{\Psi}.
\end{align*}
Optimality in \eqref{eq:SKH_scalar} therefore gives
$C_{\mathrm{FB}}(P)=\frac12\log(\Psi_Y/\Psi)
=\log|\det J|$.
\end{proof}

\begin{lemma}[Message embedding]\label{lemma:embedding}
For every message $\theta$, the embedding inputs
$\rvx^d=-C^\dagger W\mathbf a(\theta)$ produce
$\Delta_{d+1}=W\mathbf a(\theta)$.
\end{lemma}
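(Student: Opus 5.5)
We would track the mismatch $\Delta_i=\shat_i-\shathat_i$ directly through the embedding phase. During $i=1,\dots,d$ the encoder updates $\shat_{i+1}=F\shat_i+K_p\rve_i$, while the decoder uses \eqref{eq:embedding_decoder} with the same gain $K_p$. Subtracting, and using $\rvy_i=\rvx_i+\rvz_i$ together with $\rvz_i=H\shat_i+\rve_i$, we obtain
\begin{align*}
\Delta_{i+1}
&=F\Delta_i+K_p\rve_i-K_p\bigl(\rvx_i+H\shat_i+\rve_i-H\shathat_i\bigr)\\
&=(F-K_pH)\Delta_i-K_p\rvx_i .
\end{align*}
The innovation cancels because the two recursions share the gain $K_p$; this is exactly the stated purpose of using $K_p$ during embedding. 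Hence $\Delta_{i+1}$ is a deterministic linear function of the inputs, and the noise plays no role in it.

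Starting from $\Delta_1=0$, we unroll the recursion for $d$ steps:
\begin{align*}
\Delta_{d+1}=-\sum_{i=1}^{d}(F-K_pH)^{d-i}K_p\,\rvx_i=-C\rvx^d,
\end{align*}
where $C=[(F-K_pH)^{d-1}K_p\ \cdots\ K_p]$ as in \eqref{eq:def_C}. The column ordering matches, since $\rvx_1$ multiplies $(F-K_pH)^{d-1}K_p$. Substituting $\rvx^d=-C^\dagger W\mathbf a(\theta)$ gives $\Delta_{d+1}=CC^\dagger W\mathbf a(\theta)$. The claim therefore reduces to $CC^\dagger W=W$, i.e., $\Range(W)\subseteq\Range(C)$, since $CC^\dagger$ is the orthogonal projector onto $\Range(C)$.

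The main obstacle is establishing this range inclusion. Here $C$ is the controllability matrix of $(F-K_pH,K_p)$, and its range is the reachable subspace, which equals the reachable subspace of $(F,K_p)$ because output injection preserves reachability. We also have $\Range(W)=\Range(U)=\Range(\Sigmahat)$. So it suffices to show $\ker(C^\top)\subseteq\ker(\Sigmahat)$.

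Take $v$ with $K_p^\top(F^\top)^kv=0$ for all $k$. We aim to show $\Sigmahat v=0$ by exploiting the Riccati tightness of $\Sigmahat$ from Lemma~\ref{lem:maximal_solution}, in the spirit of Lemma~\ref{lem:kernel} but in the reverse direction. From \eqref{eq:Omega_DARE}, $\Sigmahat=F_p\Omega F_p^\top$, and the gain identity \eqref{eq:gain_difference} shows that the dynamics on $\Range(\Sigmahat)$ are driven only through $K_p$. An alternative route is to note that $\Sigmahat$ can be written as a controllability-type Gramian of $(F,K_p)$-reachable directions, obtained from the covariance $\operatorname{Cov}(\shat_i-\shathat_i)$, where $\shat_i$ lives in the reachable space of $(F,K_p)$ driven by $\rve$ (after the transient from the prefix). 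Indeed $\Delta_i$ is a limit of combinations of $K_p\rve_j$ propagated by $F$ and $K_Y$-corrections lying in that space, so its covariance has range inside $\Range(C)$.

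The delicate point is justifying that the $K_Y$ terms, including the direction $F\Gamma^\top$, remain in the reachable subspace. If the direct argument fails, the fallback is to project $\Sigmahat$ onto $\Range(C)$. That projection preserves feasibility and the objective, since $H$ and $F$ act compatibly through observability on the reachable part. We would then verify that maximality of the Riccati solution forces the projected and original solutions to coincide, and hence $\Range(\Sigmahat)\subseteq\Range(C)$.
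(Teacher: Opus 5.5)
Your first half matches the paper. The shared gain $K_p$ cancels the innovation, so $\Delta_{d+1}=-C\rvx^d=CC^\dagger W\mathbf a(\theta)$, and the lemma reduces to $\Range(W)=\Range(\Sigmahat)\subseteq\Range(C)$. But this inclusion is the entire content of the lemma, and you do not prove it.

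\begin{itemize}
\item Route (a) only restates $\Range(\Sigmahat)=F_p\Range(\Sigmahat)$. By itself that says nothing about $\Range(C)$.
\item Route (b) identifies the optimizer $\Sigmahat$ with the stationary covariance of $\Delta_i$. But $\Sigmahat$ is merely a PSD solution of the Riccati equation (the maximal one for fixed $\Gamma$), and that equation can have several solutions. Identifying it with the limit of actual error covariances is an unproved and nontrivial step.
\item The ``delicate point'' you flag is circular. Since $Q\succeq0$ gives $\Gamma^\top\in\Range(\Sigmahat)$, the gain $K_Y$ lies in $F\,\Range(\Sigmahat)+\Range(K_p)$. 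So keeping the $K_Y$-corrections inside $\Range(C)$ already presupposes the inclusion you want.
\item The fallback does not work. Maximality only says $\Sigmahat$ dominates every other solution, which cannot force it to equal a projected one. You also do not show the projection is feasible, and it would have to modify $\Gamma$ as well.
\end{itemize}

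The paper closes the gap dynamically. For $v\perp\Range(C)$ one has $K_p^\top(F_0^\top)^kv=0$, so $F_p=F_0-K_p\Gamma\Sigmahat^\dagger$ gives $(F_p^\top)^kv=(F_0^\top)^kv$. Combined with $F_pW=WJ$, this yields $W^\top v=(J^{-k})^\top W^\top(F_0^\top)^kv$. Antistability of $J$ (Lemma~\ref{lem:active_modes}) beats the at most polynomial growth of $F_0^k$, since $\rho(F_0)\le1$ at the maximal DARE solution. This forces $W^\top v=0$.

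Alternatively, your own characterization of $\Range(C)^\perp$ through $(F,K_p)$ can be finished in a few lines using Assumption~\ref{ass:F_stable}. The subspace $\Range(C)^\perp$ is $F^\top$-invariant and annihilated by $K_p^\top$. Take an orthonormal basis $V$ of it and set $F_{22}:=V^\top FV$; then $F_{22}$ is Schur and $V^\top F=F_{22}V^\top$. Let $X:=V^\top\Sigmahat V\succeq0$. Compressing the top-left block of $\mathcal L\succeq0$ gives $F_{22}XF_{22}^\top-X\succeq0$. Summing the Lyapunov series then gives $X\preceq0$, hence $X=0$ and $\Sigmahat V=0$. This variant uses only feasibility and $\rho(F)<1$; it needs neither Riccati tightness nor antistability of $J$.

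Some such argument is required. As written, your proposal names the obstacle but does not overcome it.
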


\begin{proof}[Proof of Lemma \ref{lemma:embedding}]
During embedding, the mismatch satisfies
$\Delta_{i+1}=F_0\Delta_i-K_p\rvx_i$, where
$F_0:=F-K_pH$ and $\Delta_1=0$. Hence
\begin{align}
    \Delta_{d+1}
    =-C\rvx^d
    =CC^\dagger W\mathbf a(\theta).
    \label{eq:implant_projection}
\end{align}
Since $CC^\dagger$ is the orthogonal projector onto $\Range(C)$,
it remains to prove $\Range(W)\subseteq\Range(C)$ so that $CC^\dagger W = W$.

Let $\mathcal R:=\Range(C)$. By Cayley--Hamilton,
$\mathcal R=\operatorname{span}\{F_0^kK_p:k\ge0\}$.
For $v\in\mathcal R^\perp$, we have
$K_p^\top(F_0^\top)^kv=0$ for every $k\ge0$.
Since $F_p=F_0-K_p\Gamma\hat\Sigma^\dagger$, for every $k\ge0$,
\begin{align*}
    F_p^\top(F_0^\top)^kv
    &=
    (F_0^\top)^{k+1}v
    -\hat\Sigma^\dagger\Gamma^\top
      K_p^\top(F_0^\top)^kv\\
    &=
    (F_0^\top)^{k+1}v.
\end{align*}
Thus induction gives $(F_p^\top)^kv=(F_0^\top)^kv$.
On the other hand, $F_pW=WJ$ implies $F_p^kW=WJ^k$, so
\begin{align*}
    W^\top(F_0^\top)^kv
    =
    W^\top(F_p^\top)^kv
    =
    (J^k)^\top W^\top v.
\end{align*}
Multiplying by $(J^{-k})^\top$ gives
\begin{align}
    W^\top v
    =
    (J^{-k})^\top W^\top(F_0^\top)^kv.
    \label{eq:implant_unreachable}
\end{align}
The solution to the DARE \eqref{eq:Ricatti equation} is maximal and therefore its closed loop matrix satisfies $\rho(F_0)\le1$ and thus 
$\|F_0^k\|$ can only grow polynomially with $k$. Since $J$ is antistable, $\|J^{-k}\|$ decays exponentially with $k$, and thus the right-hand side of
\eqref{eq:implant_unreachable} tends to zero, implying $W^\top v=0$. 
Hence $\mathcal R^\perp\subseteq\ker(W^\top)$. Therefore $CC^\dagger W=W$, and
\eqref{eq:implant_projection} proves the claim.
\end{proof}

\begin{lemma}[Effective channel]\label{lem:refinement}
Conditioned on the message $\theta$, the decoder statistic is distributed as
\begin{align}\label{eq:lemma_w_law}
    \mathbf w_{n+1}
    \sim
    \mathcal N\left(
        \mathbf{a}(\theta)-J^{-(n-d)}J_Y^{(n-d)} \mathbf{a}(\theta),\,
        J^{-(n-d)}\Xi_{n-d} J^{-(n-d)\top}
    \right),
\end{align}
where $J_Y$ is defined in Lemma \ref{lem:active_modes} and $0\preceq\Xi_{n-d}\preceq W^\dagger\hat\Sigma W^{\dagger\top}$.
\end{lemma}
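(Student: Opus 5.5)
The plan is to reduce the refinement phase to a linear recursion for the mismatch $\Delta_i$ in the modal coordinates $W$, and then to show that the decoder update for $\mathbf w_i$ telescopes.

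\emph{Step 1 (mismatch recursion).} For $i\ge d+1$ we have $\rvx_i=\Gamma\Sigmahat^\dagger\Delta_i$ and $\rvz_i=H\shat_i+\rve_i$. Hence the decoder's innovation is
\begin{align*}
\rvy_i-H\shathat_i=\tilde H\Delta_i+\rve_i,
\end{align*}
with $\tilde H:=H+\Gamma\Sigmahat^\dagger$. Subtracting \eqref{eq:shathat_recursion} from \eqref{eq:obs_form} then gives
\begin{align*}
\Delta_{i+1}=(F-K_Y\tilde H)\Delta_i+(K_p-K_Y)\rve_i .
\end{align*}
By Lemma~\ref{lemma:embedding}, $\Delta_{d+1}=W\mathbf a(\theta)\in\Range(W)=\Range(\Sigmahat)$. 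I will show that $\Range(\Sigmahat)$ is invariant under this recursion. By \eqref{eq:gain_difference}, $K_p-K_Y=-F_p\Sigmahat\tilde H^\top\Psi_Y^{-1}$, and $F_p\Range(\Sigmahat)=\Range(\Sigmahat)$ by Lemma~\ref{lem:active_modes}(i); so the noise term lies in $\Range(\Sigmahat)$. Writing $F-K_Y\tilde H=F_p+(K_p-K_Y)\tilde H$, the state term also maps $\Range(\Sigmahat)$ into itself. Therefore $\Delta_i=W\delta_i$ for all $i$, and with $g:=W^\dagger(K_p-K_Y)$ and $h:=\tilde HW$,
\begin{align*}
\delta_{i+1}=J_Y\delta_i+g\rve_i,\qquad J_Y=J+gh,\qquad \delta_{d+1}=\mathbf a(\theta).
\end{align*}
The identity $J_Y=J+gh$ follows from $W^\dagger F_pW=J$ (Lemma~\ref{lem:active_modes}(i)) together with $W^\dagger W=I_r$.

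\emph{Step 2 (telescoping).} Index the refinement phase by $k=i-d$ and write $N:=n-d$. The decoder increment is $-J^{-k}g(h\delta_i+\rve_i)$. Using $J_Y=J+gh$,
\begin{align*}
J^{-k}\delta_{i+1}-J^{-(k-1)}\delta_i=J^{-k}\bigl(J_Y-J\bigr)\delta_i+J^{-k}g\rve_i=J^{-k}g(h\delta_i+\rve_i).
\end{align*}
Summing from $k=1$ to $N$ with $\mathbf w_{d+1}=0$ gives
\begin{align*}
\mathbf w_{n+1}=\delta_{d+1}-J^{-N}\delta_{n+1}.
\end{align*}
Unrolling the recursion, $\delta_{n+1}=J_Y^N\mathbf a(\theta)+\sum_{j=0}^{N-1}J_Y^jg\,\rve_{n-j}$. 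Since the $\rve_i$ are i.i.d.\ $\mathcal N(0,\Psi)$ and independent of the message, this yields the Gaussian law \eqref{eq:lemma_w_law} with
\begin{align*}
\Xi_N:=\sum_{j=0}^{N-1}J_Y^jg\Psi g^\top J_Y^{j\top}\succeq0 .
\end{align*}

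\emph{Step 3 (covariance bound), which I expect to be the main obstacle.} Set $S:=W^\dagger\Sigmahat W^{\dagger\top}$. I will show the Lyapunov identity
\begin{align*}
S=J_YSJ_Y^\top+g\Psi g^\top .
\end{align*}
Since $J_Y$ is Schur by Lemma~\ref{lem:active_modes}(iii), the unique solution is $S=\sum_{j\ge0}J_Y^jg\Psi g^\top J_Y^{j\top}$, so $\Xi_N\preceq S$ follows. To prove the identity:
\begin{itemize}
\item Use $J_YSJ^\top=S$ from \eqref{lemma:scheme_technical_iii}. Then $J_YSJ_Y^\top=J_YS(J+gh)^\top=S+J_YSh^\top g^\top$.
\item It remains to check $J_YSh^\top=-g\Psi$. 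From \eqref{eq:gain_difference}, $g=-JSh^\top\Psi_Y^{-1}$, using $W^\dagger F_p\Sigmahat\tilde H^\top=JSh^\top$.
\item Then $J_YSh^\top=JSh^\top+gh S h^\top=-g\Psi_Y+g(\Psi_Y-\Psi)=-g\Psi$, where $hSh^\top=\Psi_Y-\Psi$ holds because $M=0$ (Theorem~\ref{thm:main}).
\end{itemize}
This bookkeeping, together with the invariance argument in Step 1, is where care is needed; the remaining steps are routine.
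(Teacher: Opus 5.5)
Your proposal is correct and follows essentially the same route as the paper: the same mismatch recursion with invariance of $\Range(\hat\Sigma)$, the same telescoping identity $\mathbf w_{n+1}=\mathbf a(\theta)-J^{-(n-d)}W^\dagger\Delta_{n+1}$, and the same modal Lyapunov identity $S=J_YSJ_Y^\top+g\Psi g^\top$ for the covariance bound. The one difference is how you reach that identity and the bound: the paper projects the Joseph form of the refinement DARE and inducts on $S-\Xi_k=J_Y(S-\Xi_{k-1})J_Y^\top$ from $\Xi_0=0$, whereas you derive it from $J_YSJ^\top=S$ plus the gain identity and then use that $J_Y$ is Schur. That last step is unnecessary, since iterating the identity directly gives $S=J_Y^{N}SJ_Y^{N\top}+\Xi_N$.
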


\begin{proof}[Proof of Lemma \ref{lem:refinement}]
We first show that the decoder statistic satisfies
\begin{align}
    \mathbf w_i
    =
    \mathbf a(\theta)-J^{-(i-d-1)}W^\dagger\Delta_i,
    \qquad i=d+1,\ldots,n+1,
    \label{eq:effective_identity}
\end{align}
and then characterize the distribution of $W^\dagger\Delta_{i}$ conditioned on $\rvmsg=\theta$.

To prove \eqref{eq:effective_identity}, recall that during refinement,
$i=d+1,\ldots,n$, the mismatch satisfies
\begin{align}
    \Delta_{i+1}
    &=(F-K_Y\tilde H)\Delta_i+(K_p-K_Y)\rve_i \nonumber\\
    &=F_p\Delta_i+(K_p-K_Y)(\rvy_i-H\shathat_i),
    \label{eq:refinement_dynamics}
\end{align}
with $\Delta_{d+1}=W\mathbf a(\theta)$ by
Lemma~\ref{lemma:embedding}.
Since $K_p-K_Y=-F_p\hat\Sigma\tilde H^\top\Psi_Y^{-1}$ has range in
the $F_p$-invariant subspace $\mathcal X=\Range(W)$,
the mismatch remains in $\mathcal X$.
Thus $\Delta_i=WW^\dagger\Delta_i$ and, using $F_pW=WJ$,
$W^\dagger F_p\Delta_i=JW^\dagger\Delta_i$.

Multiplying the second line of \eqref{eq:refinement_dynamics} by
$J^{-(i-d)}W^\dagger$ and adding the decoder update shows that
$\mathbf w_i+J^{-(i-d-1)}W^\dagger\Delta_i$ is independent of $i$.
Its value is $\mathbf a(\theta)$, since $\mathbf w_{d+1}=0$
and $W^\dagger\Delta_{d+1}=\mathbf a(\theta)$.
This proves \eqref{eq:effective_identity}.

It remains to characterize the distribution of $W^\dagger\Delta_i$. For $i=d+1,\ldots,n$, the first line of \eqref{eq:refinement_dynamics} gives
\begin{align}
    W^\dagger\Delta_{i+1}
    =
    J_YW^\dagger\Delta_i+W^\dagger(K_p-K_Y)\rve_i.
    \label{eq:stable_refinement}
\end{align}
Since $W^\dagger\Delta_{d+1} = \mathbf a(\theta)$ and the innovations are i.i.d. Gaussian and independent of the message, \eqref{eq:stable_refinement} implies by induction that, conditioned on $\rvmsg=\theta$, $W^\dagger\Delta_i$ is Gaussian with mean
$J_Y^{i-d-1}\mathbf a(\theta)$.

Let $\Xi_{i-d-1}:=\operatorname{Cov}(W^\dagger\Delta_i\mid\rvmsg=\theta)$.
By \eqref{eq:implant_projection}, $\Xi_0=0$.
Using the independent innovations in \eqref{eq:obs_form}, the recursion
\eqref{eq:stable_refinement} gives
\begin{align}\label{eq:refinement_covariance}
    \Xi_{i-d}
    &=
    J_Y\Xi_{i-d-1}J_Y^\top
    +W^\dagger(K_p-K_Y)\Psi(K_p-K_Y)^\top(W^\dagger)^\top
    \nonumber\\
    &= J_Y\Xi_{i-d-1}J_Y^\top + W^\dagger (\hat\Sigma
-  (F-K_Y\tilde H)\hat\Sigma(F-K_Y\tilde H)^\top) (W^\dagger)^\top,
\end{align}
where the second equality follows from the DARE \eqref{eq:refinement_Riccati} that can be expressed as
\begin{align}
    \hat\Sigma
    &= (F-K_Y\tilde H)\hat\Sigma(F-K_Y\tilde H)^\top  +
    (K_p-K_Y)\Psi(K_p-K_Y)^\top.
\end{align}

Using $\hat\Sigma=WW^\dagger\hat\Sigma(W^\dagger)^\top W^\top$
and the definition of $J_Y$, \eqref{eq:refinement_covariance} gives
\begin{align*}
    W^\dagger\hat\Sigma(W^\dagger)^\top-\Xi_{i-d}
    =
    J_Y\left(
        W^\dagger\hat\Sigma(W^\dagger)^\top-\Xi_{i-d-1}
    \right)J_Y^\top.
\end{align*}
Since $\Xi_0=0$, induction yields
$\Xi_{i-d-1}\preceq W^\dagger\hat\Sigma(W^\dagger)^\top$
for $i=d+1,\ldots,n+1$. Combining the covariance with the derived conditional mean and \eqref{eq:effective_identity} at $i=n+1$ proves \eqref{eq:lemma_w_law}.
\end{proof}

\begin{proof}[Proof of Proposition~\ref{prop:scheme}]
By nearest-neighbor decoding, the sub-message $\theta_j$ is decoded incorrectly only if the estimation error has magnitude at least half the PAM spacing. Consider a sub-message with $R_j>0$, and let $E_{j,n}:= [\mathbf w_{n+1}]_j-a_j(\theta_j)$ denote the estimation error. By Lemma~\ref{lem:refinement}, $E_{j,n}$ conditioned on $\theta$, is Gaussian with mean and variance
\begin{align}
    \mu_{j,n}
    &:=-e_j^\top J^{-(n-d)}J_Y^{(n-d)}\mathbf a(\theta), \nonumber\\
    \sigma_{j,n}^2
    &:=e_j^\top J^{-(n-d)}\Xi_{n-d}(J^{-(n-d)})^\top e_j .
    \label{eq:coordinate_mean_variance}
\end{align}

Let $d_{j,n}$ denote the PAM spacing distance. We can bound the error probability for message $\theta_j$ as follows
\begin{align}
    \Pr(\hat\theta_j\neq\theta_j\mid\rvmsg=\theta)
    &\le
    \Pr\left(
        \left|E_{j,n}\right|
        \ge \frac{d_{j,n}}{2}
        \,\middle|\,\rvmsg=\theta
    \right) \nonumber\\
        &\le
    \Pr\left(
        |E_{j,n}-\mu_{j,n}|
        \ge
        \frac{d_{j,n}}2-|\mu_{j,n}|
        \,\middle|\,\rvmsg=\theta
    \right)\nonumber\\
    &\le
    2\exp\left\{
        -\frac{\left(\frac{d_{j,n}}{2}-|\mu_{j,n}|\right)^2}
        {2\sigma_{j,n}^2}
    \right\},
    \label{eq:coordinate_error}
\end{align}
where the last inequality follows from the Gaussian tail bound whenever $\frac{d_{j,n}}2-|\mu_{j,n}| >0$, and we also use the fact that for sufficiently large $n$, $\sigma_{j,n}>0$, since
$\Xi_{n-d}\to W^\dagger\hat\Sigma(W^\dagger)^\top\succ0$
and $J$ is invertible.

We next analyze the asymptotic behavior of the argument of the exponent, and then show that the condition above is satisfied for large $n$. By the Jordan form of $J$, for some constant $c>0$ independent of $n$,
\begin{align}
    \|e_j^\top J^{-(n-d)}\|
    \le
    c\,n^{r-1}\rho_j^{-(n-d)}.
\end{align}
Since $J_Y$ is Schur, $\mathbf a(\theta)$ is uniformly bounded, and
$\Xi_{n-d}\preceq W^\dagger\hat\Sigma(W^\dagger)^\top$, it follows
uniformly in $\theta$ that
\begin{align}
    |\mu_{j,n}|,\ \sigma_{j,n}
    \le
    c\,n^{r-1}\rho_j^{-(n-d)}.
    \label{eq:error_scale}
\end{align}
Moreover, since $\mathcal A_{j,n}$ contains $2^{nR_j}$ equally spaced
points in $[-1/2,1/2]$, $d_{j,n}\ge 2^{-nR_j}$. Hence, for sufficiently large $n$,
\begin{align}
    \frac{\frac{d_{j,n}}2-|\mu_{j,n}|}{\sigma_{j,n}}
    &\ge
    \frac{
        \frac12 2^{-nR_j}
        -c n^{r-1}\rho_j^{-(n-d)}
    }{
        c n^{r-1}\rho_j^{-(n-d)}
    } =
    \frac{
        2^{\,n(\log\rho_j-R_j)-d\log\rho_j}
    }{2c n^{r-1}}-1 .
    \label{eq:normalized_distance}
\end{align}
Therefore, if $R_j<\log\rho_j$, the right-hand side grows exponentially
with $n$. In particular, for any
$0<\delta_j<\log\rho_j-R_j$ and all sufficiently large $n$,
$ \frac{\frac{d_{j,n}}2-|\mu_{j,n}|}{\sigma_{j,n}}
    \ge 2^{\delta_j n}$. This also verifies $\frac{d_{j,n}}2>|\mu_{j,n}|$ for sufficiently large
$n$. Substituting into \eqref{eq:coordinate_error} gives $\Pr(\hat\theta_j\neq\theta_j\mid\rvmsg=\theta)
    \le 2\exp\left\{-\frac12\,2^{2\delta_j n}\right\}$.

Taking $\delta:=\min_j\delta_j>0$ and applying a union bound over the
$r$ coordinates gives $P_{e,\max}^{(n)} \le 2r\exp\left\{-\frac12 2^{2\delta n}\right\}$, which is doubly exponential in $n$. By Lemma~\ref{lem:active_modes}, $\sum_j\log\rho_j=C_{\mathrm{FB}}(P)$. Hence, for every $R<C_{\mathrm{FB}}(P)$, we can choose rates $R_j<\log\rho_j$ such that $\sum_jR_j=R$.

\emph{Power constraint.} The input during the refinement phase is $\rvx_{i} = \Gamma\hat\Sigma^\dagger \Delta_{i}$. By Lemma~\ref{lem:refinement}, $\Delta_{i}\in\Range(W)$, so we can write $\rvx_{i} = \Gamma\hat\Sigma^\dagger WW^\dagger\Delta_{i}$. The distribution of $W^\dagger\Delta_{i}$ in Lemma \ref{lem:refinement} implies 
\begin{align}
    \mathbb E[\rvx_{i}^2\mid\theta] &= \Gamma\hat\Sigma^\dagger W \Xi_{i-d-1} W^\top \hat\Sigma^\dagger \Gamma^\top +
    \left(\Gamma\hat\Sigma^\dagger WJ_Y^{i-d-1}\mathbf a(\theta) \right)^2\nonumber\\ 
    &\le \Gamma\hat\Sigma^\dagger W W^\dagger\hat\Sigma W^{\dagger\top} W^\top \hat\Sigma^\dagger \Gamma^\top + \left( \Gamma\hat\Sigma^\dagger WJ_Y^{i-d-1}\mathbf a(\theta)\right)^2\nonumber\\
    &= P+ \left( \Gamma\hat\Sigma^\dagger WJ_Y^{i-d-1}\mathbf a(\theta)\right)^2,
\end{align}
where the inequality follows from $\Xi_{i-d-1}\preceq W^\dagger\hat\Sigma(W^\dagger)^\top$ and the last equality from $\Gamma\hat\Sigma^\dagger\Gamma^\top=P$. Since $J_Y$ is Schur and $\mathbf a(\theta)$ is uniformly bounded, the second term is summable over refinement times $i$. The embedding energy is also uniformly bounded, so the total energy is at most $(n-d)P+K$ for some constant $K<\infty$, and a finite zero-padding can make it satisfy the power constraint.
\end{proof}

\section{Conclusions}\label{sec:conclusions}
For stationary colored Gaussian noise with rational PSDs, a
capacity-achieving stationary input can be generated by strictly causal linear filtering of the noise reconstructed from feedback. A finite message-dependent initialization realizes this structure as a
generalized SK coding scheme, with refinement parameters determined by the convex optimizer. For rational PSDs, this result supplies the
missing justification for Kim's spectral reduction
\cite{Kim2010,DerpichOstergaard2022}.

The scope of component removal beyond this setting remains to be characterized. For MIMO channels and nonstationary noise, the relevant question is under what conditions the state-space optimization admits an optimizer with $M=0$. For arbitrary stationary spectra, the corresponding question concerns the validity of $S_V\equiv0$ without a finite-dimensional state-space representation.

\bibliographystyle{IEEEtran}
\bibliography{refs}

@ARTICLE{5714269,
  author={Shayevitz, Ofer and Feder, Meir},
  journal={IEEE Transactions on Information Theory}, 
  title={Optimal Feedback Communication Via Posterior Matching}, 
  year={2011},
  volume={57},
  number={3},
  pages={1186-1222},
  doi={10.1109/TIT.2011.2104992}}

@ARTICLE{1054671,
  author={Kadota, T. and Zakai, M. and Ziv, J.},
  journal={IEEE Transactions on Information Theory}, 
  title={Capacity of a continuous memoryless channel with feedback}, 
  year={1971},
  volume={17},
  number={4},
  pages={372-378},
  doi={10.1109/TIT.1971.1054671}}

@article{Tiernan1974,
  author  = {James C. Tiernan and J. Pieter M. Schalkwijk},
  title   = {An Upper Bound to the Capacity of the Band-Limited {G}aussian Autoregressive Channel with Noiseless Feedback},
  journal = {IEEE Transactions on Information Theory},
  volume  = {20},
  number  = {3},
  pages   = {311--316},
  month   = may,
  year    = {1974},
  doi     = {10.1109/TIT.1974.1055231}
}

@article{Wolfowitz1975,
  author  = {Jacob Wolfowitz},
  title   = {Signalling over a {G}aussian Channel with Feedback and Autoregressive Noise},
  journal = {Journal of Applied Probability},
  volume  = {12},
  number  = {4},
  pages   = {713--723},
  month   = dec,
  year    = {1975},
  doi     = {10.2307/3212722}
}

@article{GallagerNakiboglu2010,
  author  = {Robert G. Gallager and Bar{\i}\c{s} Nakibo{\u{g}}lu},
  title   = {Variations on a Theme by {S}chalkwijk and {K}ailath},
  journal = {IEEE Transactions on Information Theory},
  volume  = {56},
  number  = {1},
  pages   = {6--17},
  year    = {2010},
  month   = jan,
  doi     = {10.1109/TIT.2009.2034896}
}

@book{KailathSayedHassibi2000,
  author    = {Kailath, Thomas and Sayed, Ali H. and Hassibi, Babak},
  title     = {Linear Estimation},
  publisher = {Prentice Hall},
  year      = {2000}
}

@article{CoverPombra1989,
  author={T. M. Cover and S. Pombra},
  title={Gaussian Feedback Capacity},
  journal={IEEE Transactions on Information Theory},
  volume={35},
  number={1},
  pages={37--43},
  year={1989},
  month={Jan.},
  doi={10.1109/18.42171}
}

@article{Kim2010,
  author={Y.-H. Kim},
  title={Feedback Capacity of Stationary {G}aussian Channels},
  journal={IEEE Transactions on Information Theory},
  volume={56},
  number={1},
  pages={57--85},
  year={2010},
  month={Jan.},
  doi={10.1109/TIT.2009.2037047}
}

@article{DerpichOstergaard2022,
  author={M. Derpich and J. {\O}stergaard},
  title={Comments on ``Feedback Capacity of Stationary {G}aussian Channels''},
  journal={IEEE Transactions on Information Theory},
volume = {70},
number = {3},
pages  = {1848--1851},
year   = {2024},
month  = mar,
  doi={10.1109/TIT.2022.3182270}
}

@INPROCEEDINGS{SabagKostinaHassibi2021,
  author={Sabag, Oron and Kostina, Victoria and Hassibi, Babak},
  booktitle={2021 IEEE International Symposium on Information Theory (ISIT)}, 
  title={Feedback Capacity of {MIMO} {G}aussian Channels}, 
  year={2021},
  volume={},
  number={},
  pages={7-12}
  }

@ARTICLE{SabagKostinaHassibi2023,
  author={Sabag, Oron and Kostina, Victoria and Hassibi, Babak},
  journal={IEEE Transactions on Information Theory}, 
  title={Feedback Capacity of {MIMO} {G}aussian Channels}, 
  year={2023},
  volume={69},
  number={10},
  pages={6121-6136},
  doi={10.1109/TIT.2023.3283570}}

@article{Gattami2019,
  author={A. Gattami},
  title={Feedback Capacity of {G}aussian Channels Revisited},
  journal={IEEE Transactions on Information Theory},
  volume={65},
  number={3},
  pages={1948--1960},
  year={2019},
  month={Mar.},
  doi={10.1109/TIT.2018.2879457}
}

@article{LiuHan2019,
  author={T. Liu and G. Han},
  title={Feedback Capacity of Stationary {G}aussian Channels Further Examined},
  journal={IEEE Transactions on Information Theory},
  volume={65},
  number={4},
  pages={2492--2506},
  year={2019},
  month={Apr.},
  doi={10.1109/TIT.2018.2887241}
}

@ARTICLE{LiElia,
  author={Li, Chong and Elia, Nicola},
  journal={IEEE Transactions on Information Theory}, 
  title={Youla Coding and Computation of {G}aussian Feedback Capacity}, 
  year={2018},
  volume={64},
  number={4},
  pages={3197-3215}
  }

@book{Kailath2000,
  author={T. Kailath and A. H. Sayed and B. Hassibi},
  title={Linear Estimation},
  publisher={Prentice Hall},
  address={Upper Saddle River, NJ},
  year={2000}
}

@article{Shannon1956,
  author={C. E. Shannon},
  title={The Zero Error Capacity of a Noisy Channel},
  journal={IRE Transactions on Information Theory},
  volume={2},
  number={3},
  pages={8--19},
  year={1956},
  month={Sep.},
  doi={10.1109/TIT.1956.1056798}
}

@article{SchalkwijkKailath1966,
  author={J. P. M. Schalkwijk and T. Kailath},
  title={A Coding Scheme for Additive Noise Channels with Feedback---Part I: No Bandwidth Constraint},
  journal={IEEE Transactions on Information Theory},
  volume={12},
  number={2},
  pages={172--182},
  year={1966},
  month={Apr.}
}

@INPROCEEDINGS{RawatElia2021,
  author={Rawat, Abhishek and Elia, Nicola},
  booktitle={2020 IEEE Information Theory Workshop (ITW)}, 
  title={Feedback Capacity of {ISI} {MIMO} Channel with Colored Noise}, 
  year={2021},
  volume={},
  number={},
  pages={1-5},
  doi={10.1109/ITW46852.2021.9457644}}

@incollection{Chonavel2002,
  author    = {Chonavel, Thierry},
  title     = {Rational Spectral Densities},
  booktitle = {Statistical Signal Processing},
  series    = {Advanced Textbooks in Control and Signal Processing},
  publisher = {Springer},
  address   = {London},
  year      = {2002},
  pages     = {111--117}
}

@INPROCEEDINGS{SabagISIT,
  author={Sabag, Oron and Kostina, Victoria and Hassibi, Babak},
  booktitle={2022 IEEE International Symposium on Information Theory (ISIT)}, 
  title={Feedback Capacity of {G}aussian Channels with Memory}, 
  year={2022},
  volume={},
  number={},
  pages={2547-2552},
  doi={10.1109/ISIT50566.2022.9834799}}

@ARTICLE{Butman1969,
  author={Butman, S.},
  journal={IEEE Transactions on Information Theory}, 
  title={A general formulation of linear feedback communication systems with solutions}, 
  year={1969},
  volume={15},
  number={3},
  pages={392-400},
  doi={10.1109/TIT.1969.1054302}}

@ARTICLE{Butman1976,
  author={Butman, S.},
  journal={IEEE Transactions on Information Theory}, 
  title={Linear feedback rate bounds for regressive channels (Corresp.)}, 
  year={1976},
  volume={22},
  number={3},
  pages={363-366},
  doi={10.1109/TIT.1976.1055548}}

@ARTICLE{Ebert1970,
  author={Ebert, P. M.},
  journal={The Bell System Technical Journal}, 
  title={The capacity of the {G}aussian channel with feedback}, 
  year={1970},
  volume={49},
  number={8},
  pages={1705-1712},
  doi={10.1002/j.1538-7305.1970.tb04286.x}}

@INPROCEEDINGS{Shahar-Doron2004,
  author={Shahar-Doron, A. and Feder, M.},
  booktitle={International Symposium on Information Theory, 2004. ISIT 2004. Proceedings.}, 
  title={On a capacity achieving scheme for the colored {G}aussian channel with feedback}, 
  year={2004},
  volume={},
  number={},
  pages={74},
  doi={10.1109/ISIT.2004.1365109}}

@ARTICLE{Elia2004,
  author={Elia, N.},
  journal={IEEE Transactions on Automatic Control}, 
  title={When {B}ode meets {S}hannon: control-oriented feedback communication schemes}, 
  year={2004},
  volume={49},
  number={9},
  pages={1477-1488},
  doi={10.1109/TAC.2004.834119}}

@ARTICLE{Ozarow1990,
  author={Ozarow, L.H.},
  journal={IEEE Transactions on Information Theory}, 
  title={Random coding for additive Gaussian channels with feedback}, 
  year={1990},
  volume={36},
  number={1},
  pages={17-22},
  doi={10.1109/18.50369}}

@ARTICLE{Yang2007,
  author={Yang, Shaohua and Kav\v{c}i\'{c}, Aleksandar and Tatikonda, Sekhar},
  journal={IEEE Transactions on Information Theory}, 
  title={On the Feedback Capacity of Power-Constrained Gaussian Noise Channels With Memory}, 
  year={2007},
  volume={53},
  number={3},
  pages={929-954},
  doi={10.1109/TIT.2006.890728}}

@ARTICLE{Kim2006,
  author={Young-Han Kim},
  journal={IEEE Transactions on Information Theory}, 
  title={Feedback capacity of the first-order moving average {G}aussian channel}, 
  year={2006},
  volume={52},
  number={7},
  pages={3063-3079},
  doi={10.1109/TIT.2006.876217}}

@ARTICLE{Ihara2012,
  author={Ihara, Shunsuke},
  journal={IEEE Transactions on Information Theory}, 
  title={Upper Bounds of Error Probabilities for Stationary {G}aussian Channels With Feedback}, 
  year={2012},
  volume={58},
  number={12},
  pages={7068-7072},
  doi={10.1109/TIT.2012.2211335}}

\end{document}